\newtheorem{lem}{Lemma}
\begin{document}
%\doublespacing

%\onehalfspacing
\begin{titlepage}
\begin{center}
\null \vspace{\stretch{0.2}}
\renewcommand{\baselinestretch}{1}
\textsc{\large{\large{{Overcoming the ill-posedness through discretization in vector tomography:}}}\\
Reconstruction of irrotational vector fields}\\

\vspace{\stretch{1}} Technical (PhD Transfer) Report

 \vspace{\stretch{1}} January 2011

\vspace{\stretch{1}} \textsc{Student: Alexandra Koulouri}\\
\textsc{Supervisor: Prof. M. Petrou }\\
\null \vspace{\stretch{0.2}}
\renewcommand{\baselinestretch}{1}
\vspace{\stretch{0.1}}
Communications \& Signal Processing Group\\
Department of Electrical \& Electronics Engineering\\
Imperial College London\\
\end{center}
\end{titlepage}

\pagenumbering{roman}

\tableofcontents

\pagenumbering{arabic}
\chapter{Introduction} \label{ch:intro}
Vector field tomographic methods intend to reconstruct and visualize
a vector field in a bounded domain by measuring line integrals of
projections of this vector field.

In particular, we have to deal with an inverse problem of recovering
a vector function from boundary measurements. As the majority of
inverse problems, vector field method is ill posed in the continuous
domain and therefore further assumptions, measurements and
constraints should be imposed for the full vector field recovery.
The reconstruction idea in the discrete domain relies on solving a
numerical system of linear equations which derives from the
approximation of the line integrals along lines which trace the
bounded domain \cite{MethodArchy}.

This report presents an extensive description of a vector field
recovery method inspired by \cite{MethodArchy}, elaborating on
fundamental assumptions and the ill conditioning of the problem and
defines the error bounds of the recovered solution. Such aspects are
critical for future implementations of the method in practical
applications like the inverse bioelectric field problem.

Moreover, the most interesting results from previous work on the
tomographic methods related to ray and Radon transform are
presented, including the basic theoretical foundation of the problem
and various practical considerations.

\section{Motivation}
In the present project, the final goal is the implementation of a
different approach for the EEG (Electroencephalography) analysis
employing the proposed vector field method. Rather than estimating
strengths or locations of the electric sources inside the brain,
which is a very complicated task, a reconstruction of the
corresponding static bioelectric field will be performed based on
the line integral measurements. This static bioelectric field can be
treated as an ``effective'' equivalent state of the brain at any
given instant. Thus for instance, health conditions and specific
pathologies (e.g. seizure disorder) may be recognized.

For this purpose, in the current report a robust mathematical and
physical model and subsequently an efficient numerical
implementation of the problem will be formulated. As a future stage,
this theoretical and numerical formulations will be adapted to the
real EEG problem with the help of experts and neuroscientists.

\section{Report Structure} The rest of the report consists of four
sections. The $2^{st}$ chapter is introductory and gives a brief
description of the previously related work as well as the
mathematical definition and theorems used by vector field
tomographic methods. The $3^{rd}$ chapter gives an overview of the
numerical vector field recovery method. Mathematical definitions,
physical assumptions and conditions for the recovery of an
irrotational vector field from line integral measurements without
considering any boundary condition are described. Also, the
approximation errors derived from the discretization of the line
integrals and the a-prior error bounds are estimated. In the
$4^{th}$ chapter, verification of the theoretical model by
performing simulations as well as the practical adaptation of this
model to the inverse bioelectric field problem using EEG 
measurements are described. Finally in the $5^{th}$ chapter, future
work is discussed.

%\newpage
%%%%%%%%%%%%%%%%%%%%%%%%%%%%%%%%%%%%%%%%%%%%%%%%%%%%%%%%%%%%%%%%%%%%%%%%%%%%%%%%%%%%%%%%
\chapter{Previous Work}
In this chapter, we review several techniques that are important
mathematical prerequisites for a better understanding of the vector
field reconstruction problem. Moreover, an overview of the most
prominent potential applications is presented.

\section{Mathematical Preliminaries}

The basic mathematical tools for the vector field problem
formulation are presented in the following sections.

\subsection{Formulation of Vector Field Tomography problem}
The reconstruction of a scalar function $f(\textbf{n})$ from its
line integrals or projections in a bounded domain is a well known
problem. Today there are many practical and research applications in
different fields such as biomedicine (e.g. MRI, CT), acoustic and
seismic tomography which employ this method with great success and
accuracy. However, there is a variety of other applications, like
blood flow (velocity) in vessels or the diffusion tensor MRI problem
where the estimation and the imaging of a vector field can be
essential for the extraction of useful information. In these cases
tomographic vector methods intend to reconstruct these fields from
scalar measurements (projections) in a similar way to the scalar
tomographic methods.

The mathematical formulation of the vector tomographic problem is
given by the line integrals (projection measurements $I_L$)
\begin{equation}\label{eq:lineIntegralPreviousWork}
I_L=\int_L\textbf{F}\cdot
d\textbf{r}=\int\textbf{F}\cdot\hat{\textbf{r}}dr
\end{equation}
where $\hat{\textbf{r}}$ is the unit vector in the direction of line
L and $\textbf{F}$ is the vector field to be recovered.

A different formulation for the projection measurement in $2D$
domain commonly used in bibliography involves the 1-D Dirac delta
function such as
\begin{equation}
I_L=\int\int_{D}(F_x(x,y)\cos\phi+F_y(x,y)\sin\phi)\delta(x\sin\phi-y\cos\phi-\rho)dxdy
 \label{eq:VRT_2D}
 \end{equation}
where $x\sin\phi-y\cos\phi=\rho$ is the line function with
parameters $(\rho,\phi)$ defined as shown in figure
\ref{fig:LineParameters} and $D$ is the bounded domain where
$\textbf{F}(x,y)\neq0$.

\begin{figure}[!htb] \centering
\includegraphics[width=0.5\textwidth]{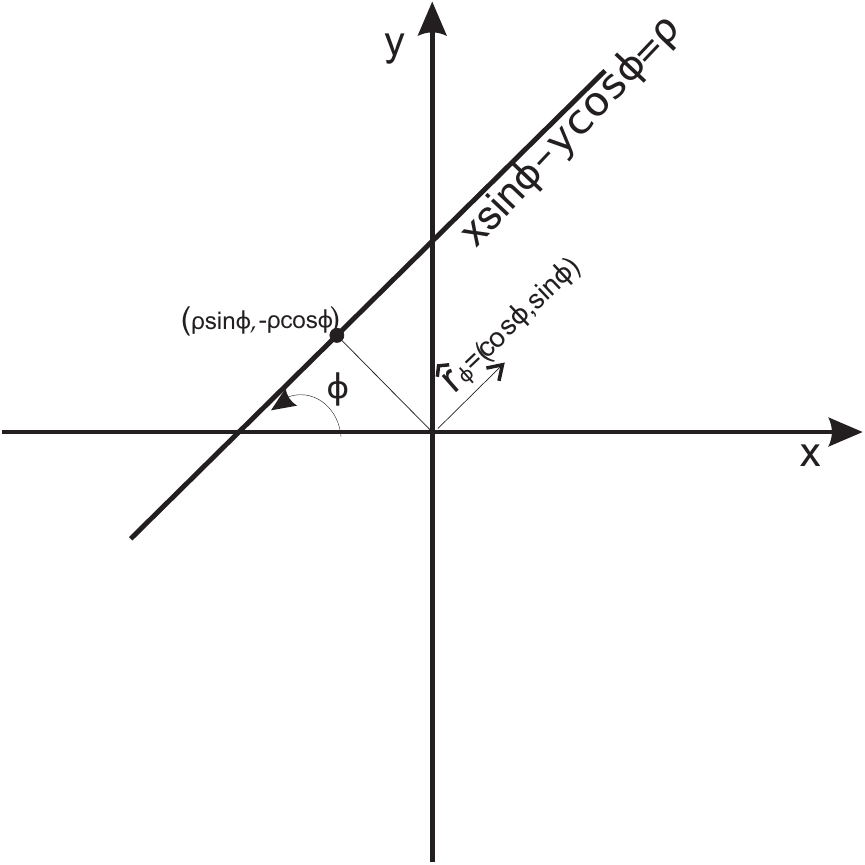} \caption{Line $L$ on $z$ plane with parameters $(\rho,\phi)$} where $0^0\leq\phi\leq180^0$ and $\rho\,\in\,\Re$.\label{fig:LineParameters}
\end{figure}

In general the vector tomographic problem is considered to be ill
posed since $\textbf{F}$ is defined by two or three components.
However, with the application of certain constraints, restrictions
and further assumptions there are ways to solve the problem.

\subsection{Helmholtz decomposition}
The Helmholtz decomposition \cite{MathForPhysics} is a fundamental
theorem of the vector calculus analysis as we shall see later.

It states that any vector $\textbf{F}$ which is twice continuously
differentiable and which, with its divergence and curl, vanishes
faster than $1/r^2$ at infinity, can be expressed uniquely as the
sum of a gradient and a curl as follows:
\[\textbf{F}=\textbf{F}_I+\textbf{F}_S \Rightarrow\]
\begin{equation}
\textbf{F}=-\nabla \Phi+ \nabla \times \textbf{A}
\label{eq:Decomposition}
\end{equation}
The scalar function $\Phi$ is called the scalar potential and
$\textbf{A}$ is the vector potential which should satisfy $\nabla
\cdot \textbf{A}=0$.

Since, $\nabla\times\textbf{F}_{I}=\nabla\times(\nabla\Phi)=0$,
component $\textbf{F}_{I}$ is called irrotational or curl-free while
$\textbf{F}_{S}$ is the solenoidal or divergence-free component as
it satisfies $\nabla\textbf{F}_{S}=\nabla\cdot(\nabla\times
\textbf{A})=0$.

In the case of a $2D$ vector field $\textbf{F}(x,y)$, the
decomposition equation becomes

\[\textbf{F}=-\nabla\Phi+\nabla\times A_{z}(x,y)\hat{z}\].

\subsection{Vectorial Ray Transform}
In tomographic theory, the line integral
\ref{eq:lineIntegralPreviousWork} is called ray transform. This
transform is closely related to the Radon transform
\cite{Book:RadonTransform} and coincides with it in two dimensions.
In higher dimensions, the ray transform of a function is defined by
integrating over lines rather than hyper-planes as the Radon
transform.

In particular, for a bounded volume $V$ ($\textbf{F}=0$ outside $V$)
and according to equation \ref{eq:lineIntegralPreviousWork}, the
vectorial ray transform can be expressed as
\begin{equation}\begin{split}
I(\phi,\theta,\textbf{p})=&\int_{L(\phi,\theta,\textbf{p})}\textbf{F}\cdot\hat{\textbf{r}}_{\phi,\theta}
dr=\\
=&\int_{L(\phi,\theta,\textbf{p})}F_x(x,y,z)\cos\phi\sin\theta
dr\\
+&\int_{L(\phi,\theta,\textbf{p})}F_y(x,y,z)\sin\phi\sin\theta
dr\\
+&\int_{L(\phi,\theta,\textbf{p})}F_z(x,y,z)\cos\theta dr
\label{eq:VRT}
\end{split}
\end{equation}
where $F_x$, $F_y$ and $F_z$ are the components of vector
$\textbf{F}$, $\phi$
%is the angle between positive $x$ axis and
%the projection of line $L$ on $z$ plane
and $\theta$
%is the polar
%angle. Together they
define the direction of the $\hat{\textbf{r}}_{\phi,\theta}$ unit
vector along line $L(\theta,\phi,\textbf{p})$ as shown in figure
\ref{fig:RayTransform} and $\textbf{p}=(x_p,y_p)$. Point
$\textbf{p}$ gives the coordinates of the line in the \emph{plane}
which passes through the origin of the axes and it is vertical to
$\hat{\textbf{r}}_{\phi,\theta}$ (see fig. \ref{fig:RayTransform}).

\begin{figure}[!htb] \centering
\includegraphics[width=0.7\textwidth]{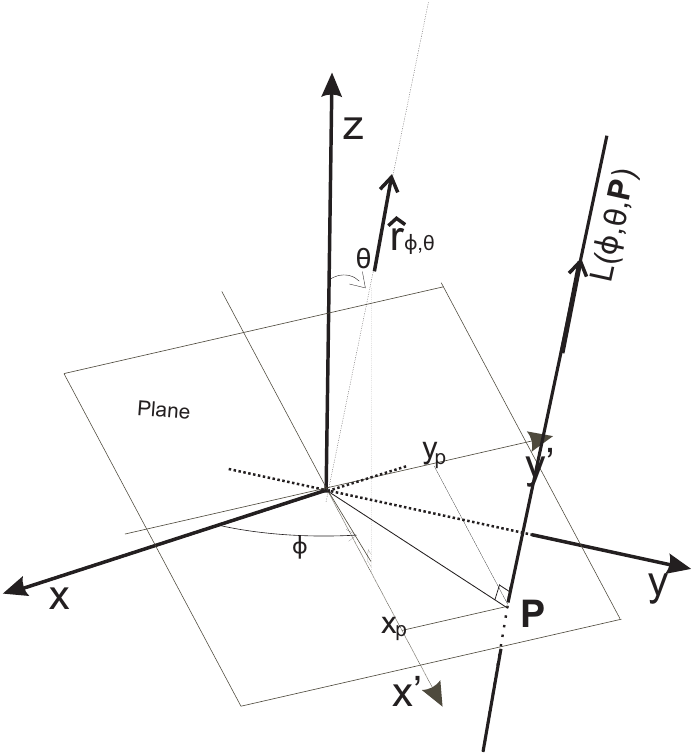} \caption{3D ray
transform geometry.}\label{fig:RayTransform}
\end{figure}

Consequently, the line integral \ref{eq:VRT} can be written as a
volume integral using the appropriate Dirac delta functions. Thus we
have
\begin{equation}\begin{split}
I(\phi,\theta,\textbf{p})&=\int\int\int_{V}F_x(x,y,z)\cos\phi\sin\theta \delta_{x_{p}}\delta_{y_{p}}dxdydz\\
&+\int\int\int_{V}F_y(x,y,z)\sin\phi\sin\theta\delta_{x_{p}}\delta_{y_{p}}dxdydz\\
&+\int\int\int_{V}F_z(x,y,z)\cos\theta\delta_{x_{p}}\delta_{y_{p}}dxdydz
 \label{eq:VRT1}\end{split}
 \end{equation}
 where \[\delta_{y_{p}}=\delta(x\sin\phi-y\cos\phi-y_{p})\] and \[\delta_{x_{p}}=\delta(-x\cos\phi\cos\theta-y\sin\phi\cos\theta+z\sin\theta-x_{p})\]

In two dimensions, $\theta=\pi/2$ and $p$ is the signed distance of
the line from the origin of the axes. Thus, equation \ref{eq:VRT1}
becomes
\begin{equation}
I(\phi,p)=\int\int_{D}(F_x(x,y)\cos\phi+F_y(x,y)\sin\phi)\delta(x\sin\phi-y\cos\phi-p)dxdy
 \label{eq:VRT_2D}
 \end{equation}
and $\textbf{F}(x,y)=0$ outside D.

\subsection{Central Slice Theorem}
The solution to the inverse \emph{scalar} ray transform is based on
the central slice theorem (CST). CST states that the values of the
$2D$ FT of scalar function $f(x,y)$ along a line with inclination
angle $\phi$ are given by the $1D$ FT of its projection $I(\phi,p)$.
This fact combined with the implementation of many practical
algorithms (e.g. back-projection) gave rise to the development of
accurate and robust reconstruction methods.

In the \emph{vectorial} ray transform, the CST does not help us
solve the problem. However, the formulation of the problem based on
the CST is important for better understanding the theoretical
approaches which will be described in the next section.

Let the Fourier Transform of $I(\phi,\theta,\textbf{p}=\{x_p,y_p\})$
be
\begin{equation} \widetilde{I}_{\phi,\theta}(k_1,k_2)=\int\int
I(\phi,\theta,x_p,y_p)e^{-i(k_1x_p+k_2y_p)}dx_pdy_p
\label{eq:CentalSliceTheorem1}
\end{equation}

Then according to equations \ref{eq:VRT} and \ref{eq:VRT1} we obtain
\begin{equation}
\widetilde{I}_{\phi,\theta}(k_1,k_2)=\cos\phi\sin\theta\widetilde{F}_x(u,v,w)+\sin\phi\sin\theta\widetilde{F}_y(u,v,w)+\cos\theta\widetilde{F}_z(u,v,w)
\label{eq:CentalSliceTheorem2}
\end{equation}
where $\widetilde{F}_x$, $\widetilde{F}_y$ and $\widetilde{F}_z$ are
the Fourier transforms of $F_x$, $F_y$ and $F_z$ respectively and
$u=k_1\sin\phi-k_2\cos\phi\cos\theta$,
$v=-k_1\cos\phi-k_2\sin\phi\cos\theta$ and  $w=k_2\sin\theta$.

Applying the Helmholtz decomposition (eq. \ref{eq:Decomposition}) we
have
\[F_x(x,y,z)=\frac{\partial {A}_z}{\partial y} -\frac{\partial A_y}{\partial z}-\frac{\partial \Phi}{\partial x}\]
\[F_y(x,y,z)=-\frac{\partial {A}_z}{\partial x} +\frac{\partial A_x}{\partial z}-\frac{\partial \Phi}{\partial y}\]
\[F_z(x,y,z)=\frac{\partial {A}_y}{\partial x} -\frac{\partial A_x}{\partial y}-\frac{\partial \Phi}{\partial z}\]

Their Fourier transform leads to
\[\widetilde{F}_x(u,v,w)=iv\widetilde{A}_z(u,v,w)-iw\widetilde{A}_y(u,v,w)-iu\widetilde{\Phi}(u,v,w)\]
\[\widetilde{F}_y(u,v,w)=iw\widetilde{A}_x(u,v,w)-iu\widetilde{A}_z(u,v,w)-iv\widetilde{\Phi}(u,v,w)\]
\[\widetilde{F}_z(u,v,w)=iu\widetilde{A}_y(u,v,w)-iv\widetilde{A}_x(u,v,w)-iw\widetilde{\Phi}(u,v,w)\]
when $\textbf{A}$ and $\Phi$ tend to zero on the volume boundaries.

Therefore the Fourier transform of the projection can be written as
\begin{equation}\label{eq:Fourier3D}\begin{split}
\widetilde{I}_{\phi,\theta}(k_1,k_2)=&i(k_1\cos\phi\cos\theta+k_2\sin\phi)\widetilde{A}_x\\+&i(
k_1\sin\phi\cos\theta-k_2\cos\phi)\widetilde{A}_y-ik_1\sin\theta\widetilde{A}_z
\end{split}\end{equation}
For the two dimensional case, $\theta=\pi/2$ and $k_2=0$ (as we have
only one variable)
\begin{equation}
\widetilde{I}_{\phi}(k)=ik\widetilde{A}_z(k\cos\phi,k\sin\phi)
\label{eq:Fourier2D}\end{equation} It is important to be mentioned
that irrotational component $\Phi$ vanishes in equations
\ref{eq:Fourier3D} and \ref{eq:Fourier2D}.

\section{Theoretical Approaches}\label{ch:RelW}
The most important theoretical and mathematical studies of the
vector field tomographic reconstruction from boundary measurements,
as well as the feasibility of this formulation to yield unique
solutions under certain constraints, were investigated only by a
small group of the research community working in this field. Norton
\cite{Norton1}, Baun and Haucks \cite{BraunAndHauck} and Prince
\cite{Prince94tomographicreconstruction}, \cite{PrinceOsman} gave a
step by step mathematical solution to the problem on bounded domains
employing the Radon transform theory. In the following subsection, a
description of their ideas and their methods is presented.

\subsection{Tomographic Vector Field Methods}
Norton in \cite{Norton1} and \cite{Norton2} was the first who
defined the full mathematical formulation of the two dimensional
problem. Norton proved that only the solenoidal component of a
vector field $\textbf{F}$ on a $D$ bounded domain can be uniquely
reconstructed from its line integrals. Moreover, he showed that when
the field $\textbf{F}$ is divergenceless i.e. there are no sources
or sinks in $D$, then both components can be recovered.

In particular, assuming a bounded vector field  $\textbf{F}$, i.e
$\textbf{F}=0$ outside a region $D$ which satisfies the homogeneous
Neumann conditions on $\partial{D}$ (on the field's boundaries), he
produced equation \ref{eq:Fourier2D} applying Helmzoltz
decomposition (eq. \ref{eq:Decomposition}) and the Central Slice
Theorem (eq.\ref{eq:CentalSliceTheorem1}). Therefore, he proved that
only the solenoidal component $\nabla\times\textbf{A}$ can be
determined.

Furthermore, in \cite{Norton1} he demonstrated that when vector
field $\textbf{F}$ is divergenceless ($\nabla\textbf{F}=0$), then
irrotational component $\Phi$ can be recovered.

From the divergence of the decomposition (eq.
\ref{eq:Decomposition}) we obtain
\[\nabla\textbf{F}=-\nabla(\nabla\Phi)+\nabla(\nabla\times\textbf{A})\Rightarrow \nabla^2\Phi=0\]

Thus, Norton was led to the Laplacian equation $\nabla^2\Phi=0$. The
solution of the Laplacian equation gives the irrotational component
and a full reconstruction of the field is possible. Norton employed
Green's theorem and  $\textbf{F}$'s boundary values for the
estimation of $\Phi$ on $D$.

Later Braun and Hauck \cite{BraunAndHauck} showed that the
projection of the orthogonal components of the vector function
(transverse projection measurement) leads to the reconstruction of
the irrotational component $\nabla\Phi$. So, they proposed that for
the full $2D$ field reconstruction, a longitudinal and a transverse
measurement are needed
\begin{equation*}
I_{\parallel}=\int\textbf{F}\cdot\hat{\textbf{r}}dr\;\;\mbox{and}\;\;
I_{\perp}=\int\textbf{F}\cdot\hat{\textbf{r}}_{\perp}dr
\end{equation*}
where $\hat{\textbf{r}}$ is the unit vector along the line and
$\hat{\textbf{r}}_{\perp}$ is the unit vector orthogonal to the
line.

Moreover, they examined the problem for non-homogeneous boundary
conditions. In that case, the irrotational and solenoidal
decomposition is not unique and they proposed to decompose the
vector into three components: the homogeneous irrotational, the
homogeneous solenoidal and the harmonic with its curl and divergence
being zero. They verified their method carrying out fluid flow
estimation experiments. With this method there is no need to assume
that there are no sources inside the domain. However, the difficulty
of taking transversal measurements as it was mention in
\cite{Norton2} makes the method quite impractical, especially for
Doppler back scattering methods.

Prince \cite{Prince94tomographicreconstruction} and Prince and Osman
\cite{PrinceOsman} extended the previous method to $3$ dimensions,
reconstructing both the solenoidal and the irrotational components
of $\textbf{F}$ from the inverse $3D$ Radon transform. Actually they
evolved the Braun-Hauck's method by defining a more general inner
product measurement which was called probe transform and it was
expressed as
\begin{equation}
G^{\textbf{p}}(\textbf{a},\rho)=\int_{\Re^3}\textbf{p}(\textbf{a},\rho)\cdot\textbf{F}(x,y,z)\delta(xa_x+ya_y+za_z-\rho)dxdydz
\label{eq:InnerProbeMeasurement}\end{equation} where $\textbf{p}$ is
the so-called  vector probe, $\rho$  the distance of the projection
plane from the origin and $\textbf{a}=(a_x,a_y,a_z)$ the normal
vector to the plane.

With the application of Helmholtz decomposition for homogeneous
field's boundaries (eq. \ref{eq:Decomposition}) and the Cental Slice
Theorem, equation \ref{eq:InnerProbeMeasurement} becomes

\[\widetilde{G}^{\textbf{p}}(\textbf{a},k)=(j2\pi k)\textbf{p(a)}\cdot[\Phi(k\textbf{a})\textbf{a}+\textbf{a}\times A(k\textbf{a})]\]
Therefore, if $\textbf{p}$ is orthogonal to $\textbf{a}$ then the
irrotational component is eliminated, while when $\textbf{p}$ is
parallel to $\textbf{a}$ then the solenoidal component vanishes. On
this basic principle Prince and Osman based their model for the
reconstruction of a $3D$ vector field.

%the collection of extra measurements increased the complexity and
%the computational cost of the process.
% According to
%\cite{MathForPhysics} a vector field is uniquely specified by giving
%its divergence and its curl within a simply connected region and its
%normal component over the boundary. In this problem
%$\nabla\textbf{F}=0$ and
%$\nabla\times\textbf{F}=\nabla\times(\nabla\times\textbf{A})$

\section{Proposed Applications}
A plethora of different applications have been proposed in the area
of vector field tomography. Some of the earlier studies by Johnson
\emph{et al.}\cite{Johnson1} and Johnson \cite{Johnson2} were
concerned with the reconstruction of the flow of a fluid by applying
numerical techniques (iterative algebraic reconstruction
techniques). Johnson \emph{et al.} \cite{Johnson1} used ultrasound
measurements (acoustic rays) to reconstruct the velocity field of
blood vessels. Later, other applications like optical polarization
tomography \cite{Hertz} for the estimation of electric field in a
Kerr material and oceanographic tomography \cite{OCEAN} were also
reported. In Kramar's thesis \cite{Kramar} a vector field method for
the estimation of the magnetic field of the sun's coronal is
presented, giving interesting results.

Moreover, in vector field literature there are many other proposed
applications in the area of Doppler back scattering, Optical
tomography, Photoelasticity and Nuclear Magnetic Resonance Plasma
physics \cite{VectorAndTensorTomography}. However, there are only a
few practical or commercial applications in this field.

\subsection{Vector Field Reconstruction and Biomedical Imaging}
Vector field tomography has not received much attention in the area
of medical applications. There are only a few papers
%\cite{Doppler1}, \cite{Doppler2}, \cite{Doppler3},
%\cite{Doppler4}, \cite{ConeBeam}, \cite{Duric}
\cite{Doppler1,Doppler2,Doppler3,Doppler4,Duric,ConeBeam}
 and one PhD thesis  \cite{Javanovic} which present relevant methods.
The main area of research according to these papers are Doppler back
scattering for blood flow estimation, although till now there are
only simulations and theoretical formulations without performing any
real experiment or employing real data.

Moreover, the Lawrence Berkeley National Laboratory
\cite{VectorAndTensorTomography} has developed many tomographic
mathematical tools and algorithms for medical imaging issues. As it
is mention in \cite{VectorAndTensorTomography}, their work has
focused mainly on the implementation of algorithms for the
reconstruction of the $3D$ diffusion tensor field from MRI tensor
projections and iterative algorithms for solving the non-linear
diffusion tensor MRI problem.
\section{Summary} Study of previous work indicates that the
vector field tomography has practical potential. Till now much of
the work was devoted to the theoretical development and formulation
of the problem. Moreover it is clear that the acquisition of the
measurements and the performance of real experiments are quite
difficult tasks and a multidisciplinary collaboration is required.

%%%%%%%%%%%%%%%%%%%%%%%%%%%%%%%%%%%%%%%%%%%%%%%%%%%%%%%%%%%%%%%%%%%%%%%%%%%%%%%
\chapter{Vector Field Recovery Method: a Linear Inverse Problem}

In the previous chapter, an extended description of the mathematical
expression of the vector field tomographic problem in ray and Radon
transform sense was presented. A different approach for vector field
recovery method stemming from the numerical inverse problems theory
will be considered here.

The numerical solution of an inverse problem requires the definition
of a set of equations mathematically adapted to the physical
properties of the problem, subsequently, the design of the
geometrical model where these equations operate and finally the
discretization of the equations to form a numerical system such as
the approximated solution to be as close as possible to the real
solution of the model.

So, the vector field recovery problem can be considered as an
inverse problem which can be formulated as an operator equation of
the form
\[K\textbf{x}=y\] with $K$ being a linear operator between spaces $X$ and $Y$ over the field
$\Re$ and where the geometrical and numerical models are designed
according to the topological and error approximation requirements of
the problem.

The current method is based on the estimation of a vector field from
the line integrals (projection measurements) in an unbounded domain.
Thus, $K$ operator is an integral operator and the vector field
recovery method relies on solving a set of linear equations which
derives from a set of numerically approximated line integrals which
trace a bounded domain $\Omega$, and are expressed as
\[\int_L\textbf{E}\cdot d\textbf{r}=\Phi(\textbf{a})-\Phi(\textbf{b})\]
where $\textbf{E}$ is the irrotational vector field,
$d\textbf{r}=\hat{\textbf{r}}dr$ with $\hat{\textbf{r}}$ the unit
vector along line $L$ and $\Phi$ gives the boundary measurements at
starting point $\textbf{a}$ and endpoint $\textbf{b}$.

The initial idea was put forward in \cite{MethodArchy} where it was
shown that there is potential for a vector field to be recovered in
a finite number of points from boundary measurements. However, this
initial idea followed an intuitive approach to the problem as it
lacked the necessary conditions and assumptions about the recovered
field and the formulation of the equations. Moreover, there was no
clear and robust proof about the well or ill posedness of this
inverse problem.

Therefore, in the rest of this text:
\begin{itemize}
 \item the necessary preconditions and assumptions are defined such as the
mathematical formulation of the problem fits the physical vector
field properties as closely as possible;
\item an extended description of the vector field method
employing the line integral formulation is given;
\item the ill posedness of the vector field method in the
continuous domain is investigated and we show that the number of
independent equations which stem from the problem's formulation give
a final numerical system which is nearly rank deficient (ill
conditioned);
\item the
approximation and discretization errors resulted by the numerical
implementation of the problem are formulated. Finally, the solution
error of the numerical system is estimated and the conditions under
which this error is bounded are presented, revealing that the
discretization is a way of ``self-regularization'' of this ill
conditioned inverse problem.
\end{itemize}

\section{Preconditions and Assumptions}
%The physical properties and characteristics of an vector field are
%extremely important for the formulation of the problem.
%Domain {\Omega:\{(x,y)\rightarrow[-U,U]^2\}
Field $\textbf{E}$ is assumed bounded and continuous in a domain
$\Omega$, bandlimited, irrotational and quasi-static. %i.e. it can be
%considered as if static at any given time.

In particular, the quasi-static condition implies that the field
behaves, at any instance, as if it is stationary. Moreover, the
field is considered irrotational and satisfies
$\nabla\times\textbf{E}=0$ and thus it can be represented by the
gradient of a scalar function $\Phi$. So, $\textbf{E}=-\nabla\Phi$
in a simply connected region (Poincare's Theorem
\cite{MathForPhysics}). Consequently, the gradient theorem gives
\begin{equation}\label{model equation1}
\int_c\textbf{E}\cdot d\textbf{c}=\int_c-\nabla\Phi
d\textbf{c}=\Phi(\textbf{a})-\Phi(\textbf{b})
\end{equation}
which will be the \emph{model equation} of our problem and implies
that the line integral of $\textbf{E}$ along any curve $c$ is
path-independent.
%Taking into account the frequency content of electrophysiological
%signals, as well as the conductivity parameters of physiological
%media, permits us to establish that all fields of biological origin
%are quasi static; that is, currents and voltage

The irrotational property $\nabla\times{\textbf{E}}=0$ in integral
form can be expressed by applying the well-known Stokes' theorem
(curl-theorem) which relates the surface integral of the curl of a
vector field over a surface $S$ in Euclidean $3D$ space to the line
integral of the vector field over its boundary $\partial {S}$ such
as
\[\int_{S}\nabla\times\textbf{E}d\textbf{S}=\int_{\partial{S}}\textbf{E}\cdot d\textbf{c}\]
For an irrotational field obviously
\begin{equation}
\int_{\partial{S}}\textbf{E}\cdot d\textbf{c}=0
\label{eq:ClosedCurve}\end{equation} So, the path integral of
$\textbf{E}$ over a closed curve (path) is equal to zero.

Finally, vector field $\textbf{E}$ is band limited,
$\int\int_{|\Omega|} \|\textbf{E}\|dxdy< \infty$ and continuous in
$\Omega$ thus is can be expanded in Fourier Series as

\[ \textbf{E}(x,y)=\sum_{\textbf{k}}\textbf{e}_ke^{i\textbf{k}\langle x,y\rangle}+\bar{\textbf{e}}_ke^{-i\textbf{k}\langle x,y\rangle}\]
where $\textbf{k}=(n_x,n_y)$ with $n_x,n_y=0,\pm1,\pm2\dots$.

As the physical and mathematical properties of the vector field have
been defined, the mathematical and numerical formulation of the
problem will be explained in the following section.
\section{Methodology}
\subsection{Mathematical Modeling}
The formulation of the method is based on the idea in
\cite{Giannakidis2010} and \cite{MethodArchy} to approximately
reconstruct a vector field $\textbf{E}$ at a finite number of points
when a sufficiently large number of line integrals $I_{L_k}$ along
lines which trace the bounded domain, are known. The model equation
for the recovery of an irrotational field inside a bounded convex
domain $\Omega$  is given by
\begin{equation}\label{eq:VectorFieldIntegral}
I_{L_{k}}=\int_{L_{k}}\textbf{E}\cdot d\textbf{r}\end{equation}
where line $L_{k}$ traces the bounded domain and intersects it in
two points. As the field is irrotational, the line integral is
$I_{L_k}=\Phi(\textbf{a}_k)-\Phi(\textbf{b}_k)$ where $\Phi$ are the
boundary values at the intersection points $\textbf{a}_k$ and
$\textbf{b}_k$ with domain $\Omega$. So, a set of linear equations
\ref{eq:VectorFieldIntegral} can be acquired for a finite number of
known values $\Phi$ on the boundaries of $\Omega$.

\subsection{Geometric Model}\label{subsection:Geometric Model}
As the model equations have been defined, the next step is the
geometric model generation. Generally, the geometric model is a
discrete domain of specific shape where the model equations are
valid. For instance, if the line integral equations (\emph{model
equations}) were employed for field recovery from scalp potential
recordings (e.g. EEG), then the geometric model would be a $3D$ mesh
with scalp shape.

In our initial approach for the evaluation of the method, a simple
geometric model was designed as described in \cite{MethodArchy}. In
particular, a discrete version of a $2D$ continuous square domain
$\Omega=:\{(x,y)\in[-U,U]^2\}$ (fig.\ref{fig:Discrete field}) was
defined using elements of constant size $[P\times P]$ called cells
or tiles and $N=U/P$.
\begin{figure}[!htb] \centering
\includegraphics[width=0.5\textwidth]{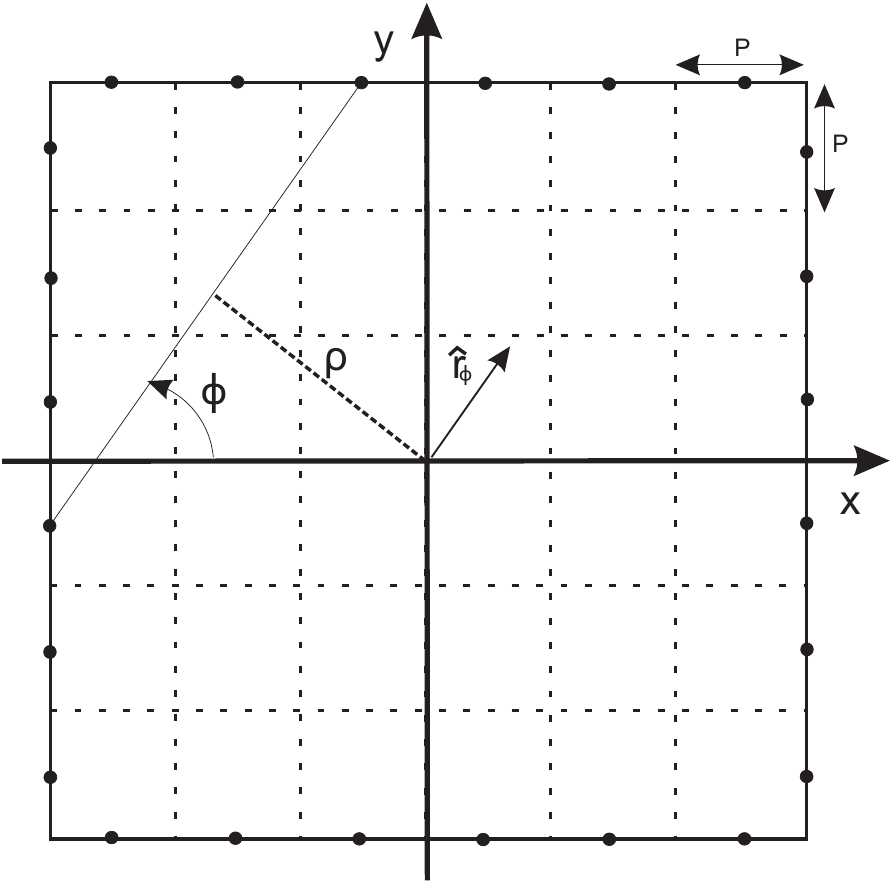} \caption{A line $L(\phi,\rho)$ of integration on a square discrete domain
where $\phi$ is the counterclockwise angle between the positive $x$
axis and the line and $\rho$ the signed distance of the line from
the origin.}\label{fig:Discrete field}
\end{figure}
The goal was to recover the field in each cell solving a numerical
system of a discretized version of \emph{model equations}
\ref{eq:VectorFieldIntegral}. The number and the positions of the
tracing lines $L_k$ in the bounded domain were defined by pairs of
sensors (boundary measurements) which were placed in the middle of
the boundary edges of all boundary cell (fig.\ref{fig:Discrete
field}). Each tracing line connected a pair of sensors which did not
belong to the same side of the square domain and thus for a $N=U/P$
number of edge cells or $N$ sensors in each side of the domain, the
connected pairs led to $6N^2$ \emph{model equations} (line
integrals). The domain had $N\times N$ cells and the $2D$ vector
field had two components. Thus the number of unknowns (value of each
cell) was $2[N\times N]$ and the number of equations was $6N^2$ .

\subsection{Numerical
Implementation}\label{subsection:NumericalImplementation} For the
numerical implementation of the method, the line integral
\ref{eq:VectorFieldIntegral} was approximated using the Riemann's
sum. Thus, \begin{equation} I_{L_k}\approx\sum_{(x_m,y_m)\in
L_k}\textbf{E}(x_m,y_m)\cdot \hat{\textbf{r}}_{\phi}\Delta r
\label{eq:Summation}\end{equation} where $\textbf{E}(x_m,y_m)$ are
the unknown vector values at sampling points $(x_m,y_m)$ along line
$L(\phi,\rho)=\{(x,y)|\,x\sin\phi-y\cos\phi=\rho\}$ with
$0^0\leq\phi\leq180^0$ and $\rho \in \Re$ and
$\hat{\textbf{r}}_{\phi}\Delta r=(\cos
\phi\hat{\textbf{x}}+\sin\phi\hat{\textbf{y}})\Delta r$
(fig.\ref{fig:Discrete field}).

For the numerical approximation, the samples $\textbf{E}(x_m,y_m)$
are assigned to values $\textbf{E}[i,j]=(e_x[i,j],e_y[i,j])$ based
on an interpolation scheme. The simple case of the nearest neighbor
approximation in a $2D$ square domain $[-U,U]^2$ leads to
$i=\lceil\frac{x_m+U}{P}\rceil$ and
$j=\lceil{\frac{y_m+U}{P}}\rceil$.

Finally, all the approximated line integrals give a set of algebraic
equations which can be represented by
\begin{equation}
\textbf{b}=\textbf{A}\textbf{x}
\label{eq:SystemOfLinearEquations1}\end{equation}
%\[\left[ \begin{array}{c} I_{L_{1}} \\ I_{L_{2}}\\ \dots \end{array} \right] =
%\begin{bmatrix} a_{11} & a_{12}&\dots \\ a_{21}& a_{22}&\dots \\ \dots & \dots & \dots \end{bmatrix} \left[
%\begin{array}{c} e_x[1] \\ e_x[2] \\ \vdots \end{array} \right]\]
where $\textbf{x}$ contains the unknown vector field components in
finite points with index $u$, $\textbf{b}$ is the column vector with
the measured values of the line integrals $I_{L_{k}}$. The elements
of the transfer matrix $\textbf{A}$, $a_{k,u}$ represent the weight
of projection of the unknown element $u$ on the $I_{L_{k}}$. For the
case of a discrete square domain as it was described in subsection
\ref{subsection:Geometric Model}, the $4N$ sensors around the
boundaries give $6N^2$ equations and as the number of the unknowns
is $2N^2$, at first sight we conclude that we deal with an
over-determined system.

\section{Ill Posedness and Ill Conditioning of the Inverse problem}\label{subsection:IllPosednessConsideration}
If we ignore the physical properties of the field, the
reconstruction formulation described in the subsections
\ref{subsection:Geometric Model} and
\ref{subsection:NumericalImplementation} seems correct and robust as
the system \ref{eq:SystemOfLinearEquations1} of algebraic equations
is over-determined and thus with the least square method, the
problem can be solved. However, the majority of the inverse problems
are typically ill-posed.
\begin{figure}[!htb] \centering
\includegraphics[width=0.5\textwidth]{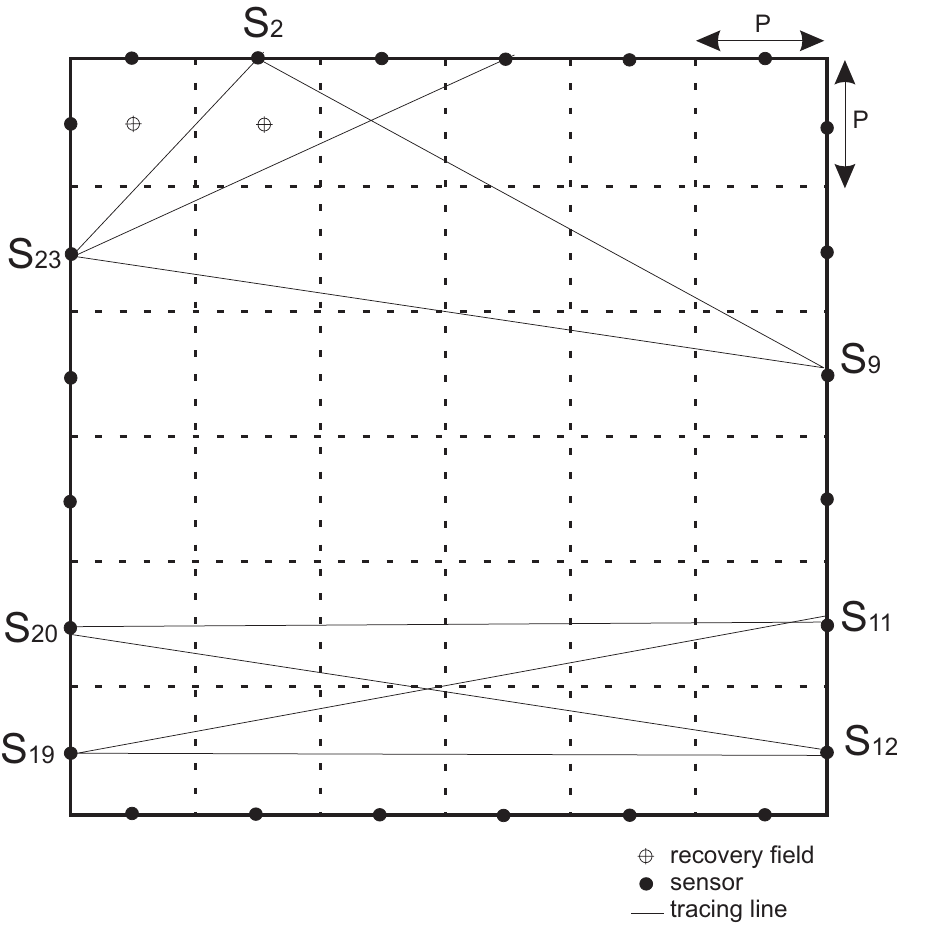} \caption{Tracing lines in the discrete domain which create closed-paths}\label{fig:Closed Paths}
\end{figure}

In the current problem, the irrotational assumption of the field
implies that a line integral along a closed path is zero. More
precisely, taking all pairs of sensors which do not belong to the
same side we obtain tracing lines which create closed paths (fig.
\ref{fig:Closed Paths}). Particulary, when the tracing lines $L_1,
L_2 \dots$ form a closed curve, according to equations
\ref{eq:ClosedCurve} and \ref{eq:VectorFieldIntegral} we obtain
\begin{equation}\label{eq:LinearDependenIntegrals}
I_{L_{1}}+I_{L_{2}}+\dots+I_{L_{n}}=0
\end{equation}
or
\[ -I_{L_{1}}=I_{L_{2}}+\dots+I_{L_{n}}\] This indicates that a line
integral can be expressed as a linear combination of other line
integrals and thus we have linearly dependent equations $I_{L_{k}}$.
For instance, in figure \ref{fig:Closed Paths} the lines which
connect sensors S$_2$, S$_9$ and S$_{23}$ create a closed loop and
thus for equations (line integrals) $I_{L_{2-9}}$, $I_{L_{9-23}}$
and $I_{L_{23-2}}$ we have
\begin{align*}
I_{L_{2-9}}+I_{L_{9-23}}+I_{L_{23-2}}&=\\
=\int_{{L_{2-9}}}\textbf{E}\cdot d\textbf{r}+
\int_{{L_{9-23}}}\textbf{E}\cdot d\textbf{r}+&\int_{{L_{23-2}}}\textbf{E} d\textbf{r}=\\
=\int_{{L_{2-2}}}\textbf{E}\cdot d\textbf{r}&=0
\end{align*}
Therefore, only two equations can be assumed independent since any
one of the three can be expressed as a linear combination of the
other two equations.

The linear dependencies are quite significant in the continuous
domain. On the other hand, in the discrete domain, where the
integral is approximated by a summation, the dependencies are not so
obvious as the accuracy of the continuous domain is lacking. Thus,
taking Riemann's summation (eq.\ref{eq:Summation}), leads to sums of
equations close to zero
\[I_{L_{1}}+I_{L_{2}}+\dots+I_{L_{n}}\approx 0\] If the accuracy
improves, i.e rather than applying nearest neighbor approximation, a
different interpolation scheme like bilinear, cubic or more
sophisticated techniques e.g. finite elements and a grid refinement
of the bounded domain, can lead to more obvious equation
dependencies.

One important task is the examination of the stability of the
solution of linear system \ref{eq:SystemOfLinearEquations1} i.e. to
check whether the independent equations are enough to give a unique
solution and whether matrix $\textbf{A}$ has full rank.

\subsubsection{Ill Conditioning: Number of Independent Equations}
The number of independent equations is important for the solution of
the problem since in the case that this number is less than the
unknowns, the system is under-determined and different mathematical
tools are needed.

In order to define the number of independent equation, we have to
exclude tracing lines which ``close'' paths such as making sure that
any line starting from one sensor does not end up to the same
sensor. The number of independent equations can be defined based on
the fundamental properties of graph theory \cite{GraphBook}.

More specifically, we assume that the $4N$ sensors along the
boundary of the square domain are the ``vertices'' of a graph G and
that the lines which connect two sensors are the G graph's
``edges''. The main property that this graph should satisfy is that
any two ``vertices'' are linked by a unique path or in other words
that the graph should be connected and without cycles. According to
graph theory, a ``tree'' is a undirected simple graph G that
satisfies the previous condition.

\begin{figure}[!htb] \centering
\includegraphics[width=0.1\textwidth]{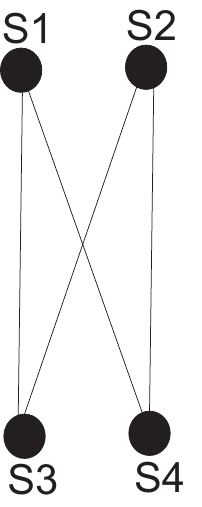}
\caption{Graph G which is not acyclic. If we extract one edge then
it becomes a tree.}\label{fig:Graph}
\end{figure}

Moreover, it is known that a connected, undirected, acyclic graph
with  $N$ ``vertices'' has $N-1$ ``edges''. Therefore in \textbf{a
$2D$ square domain (subsection \ref{subsection:Geometric Model})
with $4N$ sensors along the boundaries, the maximum number of
tracing lines in order to avoid loops is $4N-1$ and so the number of
independent equations (independent line integrals) is $4(N-1)$. In
all, the system of equations may have at most $4N-1$ independent
equations. Obviously, for a system with $2N^2$ unknowns, the
$4(N-1)$ equations lead to an under-determined case.}

As a result, the transfer matrix $\textbf{A}$ of system
\ref{eq:SystemOfLinearEquations1} approximates a rank-deficient
matrix i.e. there are nearly linearly dependent lines and the linear
system may be inconsistent and severely ill conditioned.

\subsection{Ill Conditioning Indicators}
Generally speaking, the ill posedness technically applies to
continuous problems. The discrete version of an ill posed problem
may or may not be severely ill conditioned.

The discrete approximation will behave similarly to the continuous
case as the accuracy of the approximation increases. With a
``rough'' approximation scheme and coarse discretization of the
bounded domain, the linear dependencies of the equations are
eliminated and transfer matrix $\textbf{A}$ of system
\ref{eq:SystemOfLinearEquations1} is not rank deficient.

The condition number of transfer matrix $\textbf{A}$ (system
\ref{eq:SystemOfLinearEquations1}) and the magnitude of the singular
values of $\textbf{A}$ are reliable indicators of how close to rank
deficiency and consequently to ill conditioning the numerical system
is. For the case where a bounded domain
$\Omega:\{(x,y)\rightarrow[-5.5,5.5]^2\subset \Re^2\}$ is
discretized employing cells of size $P\times P=1\times 1$, and there
are $4\times U/P= 4\times N=121$ sensors along the boundaries (see
subsection \ref{subsection:Geometric Model}), $6N^2=726$ equations
and the field is created by a single charge in position $(-19,19)$
on the \emph{z}-plane as in \cite{MethodArchy}, the condition number
is $84$, which is not so large in order to deal with a severely
ill-conditioned system. Moreover, the simulation results, which will
be presented in the next chapter, show that the estimated solution
is not far from the real one.

So, under certain conditions, system
\ref{eq:SystemOfLinearEquations1} which was derived from the
numerical approximation of the line-integral can give acceptable
results and the discretization process can be assumed as a kind of
``self-regularization'' (regularization by discretization) of the
continuous ill posed problem.

The aim of the work presented next is the mathematical definition of
the numerical errors due to the line-integrals approximations and
how these errors are related with the ill conditioning of the linear
system \ref{eq:SystemOfLinearEquations1} defining an upper error
bound of the system's numerical solution.
\section{Approximation Errors}
%\input{Ill_conditioningNerrorV3.tex}
%In this version I added a new paragraph at the end before the theoretical relationship between...
%where I mention that the rank deficiency implies the non-existence and uniqueness of the solution.
For the numerical solution of a line integral system, one has to
discretize the continuous problem and reduce it to a finite system
of linear equations. The discretization process introduces
approximation and rounding errors to the set of line integrals
(\emph{model equations} \ref{eq:VectorFieldIntegral}) which can be
interpreted as perturbations. First, the mathematical formulation
and nature of these errors will be described in the following
analysis. Subsequently, the error of the numerical solution will be
estimated.

%For the numerical estimation of the problem, the first step is the
%the discretization of the bounded domain $\Omega$ and consequently
%the dicretization of the \emph{ model equations}(line integral
%\ref{eq:VectorFieldIntegral}).
\subsection{ Description and Derivation of the Approximation Errors}
The estimation of the line integral
(eq.\ref{eq:VectorFieldIntegral}) in a $2D$ domain is given by
Riemmann's integral
\begin{equation}\label{eq:RiemannSumApproximation}
I_{L_{\rho,\phi}}=\lim_{\max\{\Delta
r_k\}\rightarrow0}\sum_{k}\textbf{E}(x_k,y_k)\cdot
\hat{\textbf{r}}_{\phi}\Delta r_k
\end{equation}
where $(x_k,y_k)$ are the coordinates of the samples
$\textbf{E}(x_k,y_k)$ of the field along line
$L_{\rho,\phi}=\{(x,y)|x\sin\phi-y\cos\phi=\rho\}$ $\in\Omega$ with
$\hat{\textbf{r}}_{\phi}=(\cos\phi,\sin\phi)$ the unit vector in the
direction of the line (fig.\ref{fig:Discrete field}) and $\Delta
r_k$ the sampling step.

The coordinates of the samples are $x_k=x_{k-1}+\Delta r_k \cos\phi$
and $y_k=y_{k-1}+\Delta r_k \sin\phi$ such as $(x_k,y_k)\in\Omega$.

Substituting $\textbf{E}$ with its vector components $(e_x,e_y)$,
equation \ref{eq:RiemannSumApproximation} is expressed as
\begin{equation}\label{eq:RiemannSumApproximationDecomposed}
I_{L_{\rho,\phi}} = \lim_{\max\{\Delta
r_k\}\rightarrow0}\sum_{k}\{e_x(x_k,y_k)\cos\phi\Delta r_k +
e_y(x_k,y_k)\sin\phi\Delta r_k\}
\end{equation}
The numerical treatment of the problem is based on the assignment of
each sample $\textbf{E}(x_k,y_k)$ to a value $\textbf{E}[i,j]$ of an
element $[i,j]$ of the discrete domain. This assignment is actually
a quantization process or ``mapping'' of the vector field samples to
a finite set of possible discrete values
(fig.\ref{fig:Quantization}).

\begin{figure}[!htb] \centering
\includegraphics[width=0.4\textwidth]{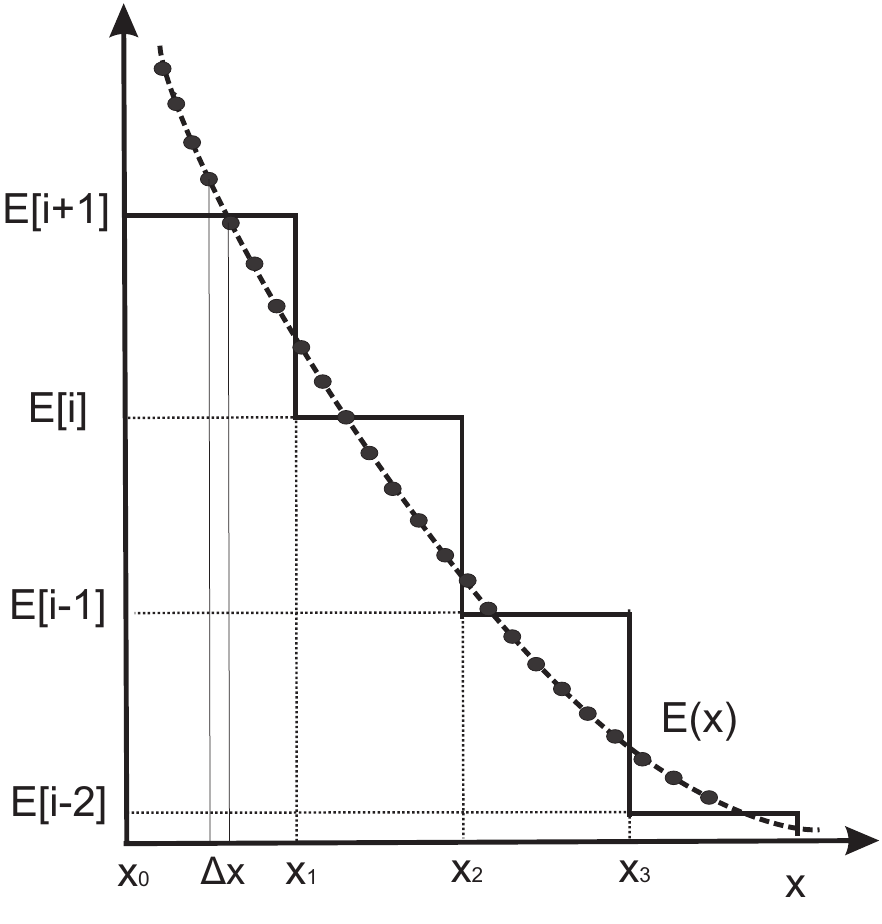} \caption{1D equivalent representation for the assignment of the samples of field $\textbf{E}$ to
the middle cell values (quantization process). In the current
representation, the samples of $E(x)$ are assigned to $E[i]$ values,
where $\Delta x$ is the sampling step.}\label{fig:Quantization}
\end{figure}

So, each sample $\textbf{E}(x_k,y_k)$ is the sum of
$\textbf{E}[i,j]$ plus an error vector $\delta\textbf{E}(x_k,y_k)$.
This can be expressed as
\begin{equation}\label{eq:NearestNeigbor}
e_x(x_k,y_k)=e_x[i,j]+ \delta e_x(x_k,y_k)\end{equation} or briefly
as
\[e_x^k=e_x[i,j]+\delta e_x^k\]
Similarly, the \textit{y}-component is
\[e_y^k=e_y[i,j]+\delta e_y^k\]
Let us assume that there is a line segment $\Delta L_{ij}\subseteq
L_{\rho,\phi}$ with length $\|\Delta L_{ij}\|$
 which lies on element(cell) $[i,j]$ as in figure \ref{fig:Discrete field Line
 Segment}.
Moreover, considering the sampling step to be constant $\Delta
r_k=\Delta r$ and $N_{ij}$ the number of samples
$\textbf{E}^k=(e_x^k,e_y^k)$ in $\Delta L_{ij}$, Riemann's sum along
segment $\Delta L_{ij}$ is
\begin{equation}\label{eq:groupingTheSampleOfCell}
\begin{split}
I_{\Delta L_{ij},\rho,\phi}= &\sum_{(e_x^k,e_y^k)\in\Delta
L_{ij}}\{e_x^k\cos \phi\Delta r+e_y^k\sin \phi\Delta r\}=\\=&
\sum_{(e_x^k,e_y^k)\in\Delta L_{ij}}\left\{(e_x[i,j]+\delta
e_x^k)\cos \phi\Delta r+\left(e_y[i,j]+\delta e_y^k\right)\sin
\phi\Delta r\right\}=
\\
=&\left(N_{ij}e_x[i,j]+\sum_{k}\delta e_x^k\right)\cos\phi\Delta r
+\left(N_{ij}e_y[i,j] +\sum_{k}\delta e_y^k\right)\sin\phi\Delta r
\end{split}
\end{equation}

Where $N_{ij}\in\aleph$ is the number of samples of $\Delta
L_{ij}\subseteq L_{\rho,\phi}$ and segment $\Delta L_{ij}$ lies
inside element $[i,j]$ of the discrete domain.

\begin{figure}[!htb] \centering
\includegraphics[width=0.5\textwidth]{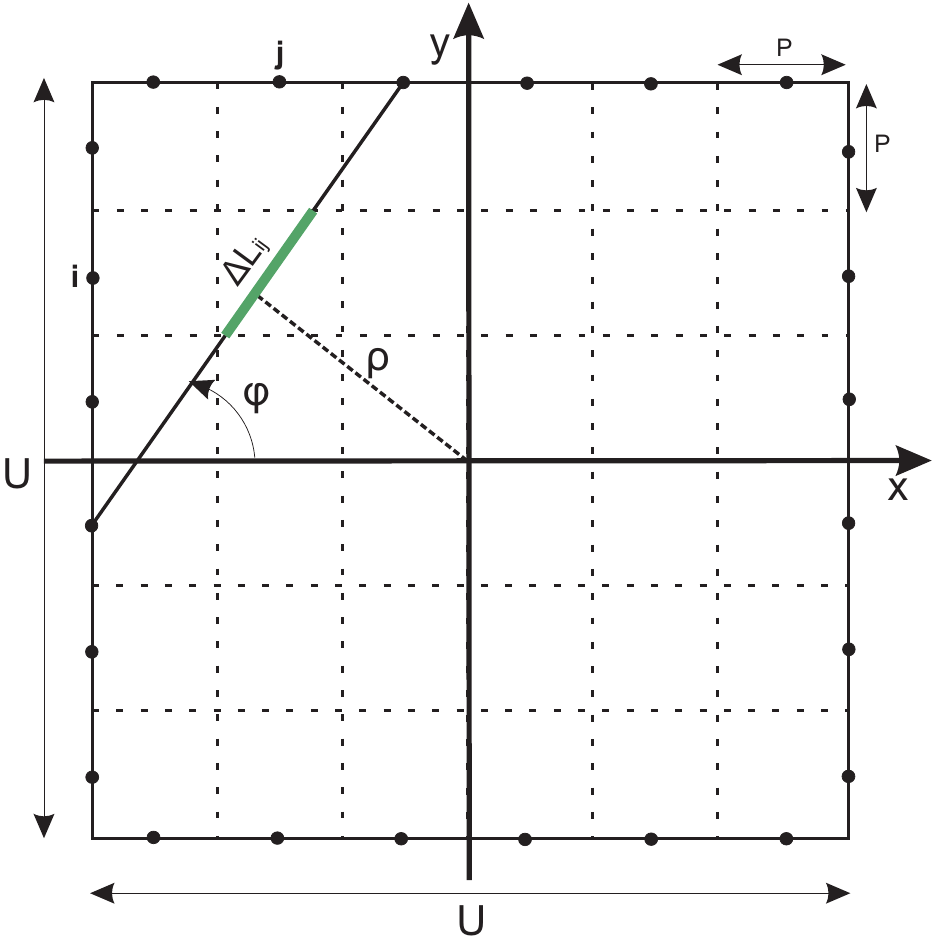} \caption{Segment $\Delta L_{ij}$ of line $L$ which
lies inside element $[i,j]$ in a $2D$ square
domain.}\label{fig:Discrete field Line Segment}
\end{figure}

The error vector caused by the discretization process in an element
$[i,j]$ is defined according to
\begin{equation}\label{eq:errorComponents}(\delta \tilde{e}_x[i,j],\delta \tilde{e}_y[i,j])=
\left(\sum_{k=N_1}^{N_2}\delta e_x^k ,\sum_{k=N_1}^{N_2}\delta
e_y^k\right)\end{equation} where $N_2-N_1+1=N_{ij}$ is the number of
samples along $\|\Delta L_{ij}\|$.

The line integral (eq.\ref{eq:RiemannSumApproximationDecomposed})
can be represented as a sum of line integrals
(eq.\ref{eq:groupingTheSampleOfCell}) along segments $\Delta L_{ij}\subseteq L_{\rho,\phi}$.\\
Thus,
\begin{equation}\label{eq:lineIntegralWithErrorsFormulations}
\begin{split}
I_{L_{\rho,\phi}}=&\lim_{\Delta r\rightarrow0}\sum_{\Delta
L_{ij}}I_{\Delta L_{ij},\rho,\phi}=\\=&\lim_{\Delta
r\rightarrow0}\sum_{\Delta L_{ij}}\sum_{\underset{\in\,\,\Delta
L_{i,j}}{(e_x^k,e_y^k)}}\{e_x^k\cos \phi\Delta r+e_y^k\sin
\phi\Delta r\}=\\=& \lim_{\Delta r\rightarrow0}\sum_{\Delta
L_{ij}}\left\{\left(N_{ij}e_x[i,j]+\delta
\tilde{e}_x[i,j]\right)\cos\phi\Delta r +
\left(N_{ij}e_y[i,j]+\delta\tilde{e}_y[i,j]\right)\sin\phi\Delta
r\right\}
\end{split}
\end{equation}
For simplicity, in the following equations indices $i$ and $j$ are
substituted by a single index $m$.

So equation \ref{eq:lineIntegralWithErrorsFormulations} becomes
\begin{equation}\label{eq:lineIntegralWithErrorsFormulations_M}
I_{L_{\rho,\phi}}=\lim_{\Delta r\rightarrow0}\sum_{\Delta
L_m}\left\{(N_m e_x[m]+\delta \tilde{e}_x[m])\cos\phi\Delta r +
(N_{m}e_y[m]+\delta\tilde{e}_y[m])\sin\phi\Delta r\right\}
\end{equation}

\begin{figure}[!htb]
\centering
\includegraphics[width=0.7\textwidth]{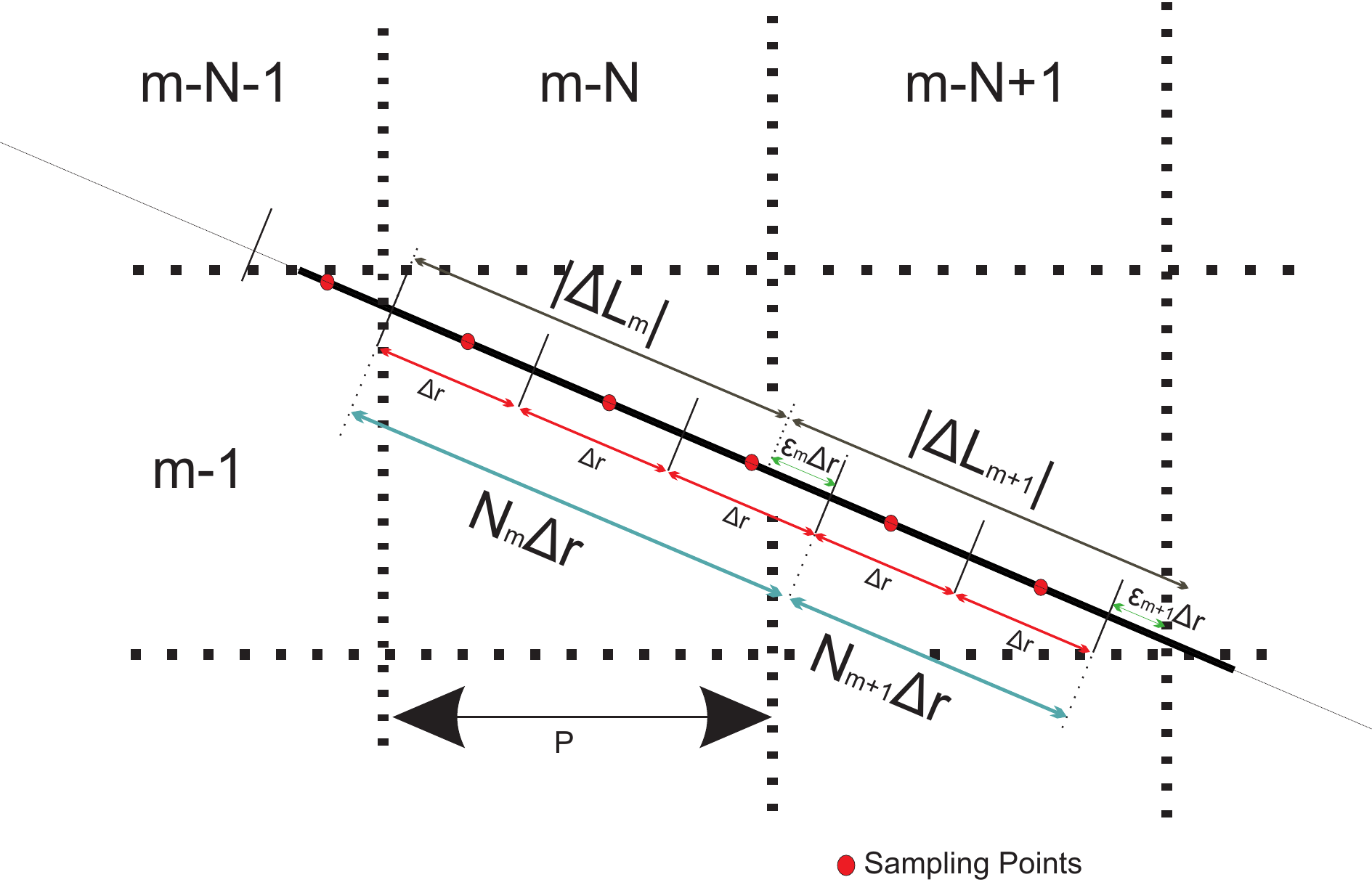} \caption{The sampling error
$\varepsilon_m$ in element $[m]$ is less than zero as  it was
defined by $N_m\Delta r=\|\Delta L_m\|-\varepsilon_m\Delta r$ while
in element $[m+1]$, $\varepsilon_{m+1}>0$.}\label{fig:SamplingError}
\end{figure}
The quantization error term due to the discretization process is
defined as
\begin{equation}\label{eq:discretizationError}\delta\tilde{E}=\lim_{\Delta
r\rightarrow0}\sum_{\Delta L_m}\left(\delta
\tilde{e}_x[m]\cos\phi\Delta r +\delta \tilde{e}_y[m]\sin\phi\Delta
r\right)\end{equation} Thus, equation
\ref{eq:lineIntegralWithErrorsFormulations_M} becomes
\begin{equation}\label{eq:lineIntegralWithErrorsFormulations_M2}
I_{L_{\rho,\phi}}=\lim_{\Delta r\rightarrow0}\sum_{\Delta
L_m}\left\{N_m \Delta
r(e_x[m]\cos\phi+e_y[m]\sin\phi)\right\}+\delta\tilde{E}
\end{equation}
Sampling step $\Delta r$ is finite and $\min\{\|\Delta
L_m\|\}\gg\Delta r>0$. Term $N_m\Delta r$ represents the length of a
segment of line $L_{\rho,\phi}$ which ``corresponds'' to element
$[m]$, where $N_m$ is the number of samples in element $[m]$
(fig.\ref{fig:SamplingError}). As the sampling step is $\Delta
r\gg0$, $N_m\Delta r\neq\|\Delta L_m\|$ as shown in figure
\ref{fig:SamplingError}. Thus, $N_m\Delta r=\|\Delta
L_m\|-\varepsilon_m\Delta r$ with $|\varepsilon_m|<1$ being a
sampling error coefficient.

The length of line $L_{\rho,\phi}$ is
\[\|L_{\rho,\phi}\|=\sum_{m} N_m\Delta r=\sum_{m}(\|\Delta
L_m\|-\varepsilon_m\Delta r)=\sum_{m}\|\Delta L_m\|
\,\,\,\mbox{and}\,\,\, \sum_m\varepsilon_m\Delta r=0.\]

Therefore, equation \ref{eq:lineIntegralWithErrorsFormulations_M2}
is equal to
\begin{equation*}
\begin{split}
I_{L_{\rho,\phi}}=& \lim_{\Delta r\rightarrow0}\sum_{\Delta L_{m}}
\left\{(\|\Delta L_m\|-\varepsilon_m\Delta
r)(e_x[m]\cos\phi+e_y[m]\sin\phi)\right\}+\delta\tilde{E}=\\=&
\sum_{\Delta L_{m}} \left\{\|\Delta
L_m\|(e_x[m]\cos\phi+e_y[m]\sin\phi)\right\}+\delta\tilde{E}
\end{split}
\end{equation*}
Finally, the line integral $I_{L_{\rho,\phi}}$ can be expressed as
the sum of two terms, one which is the discretization process term
$\sum_{\Delta L_{m}} \left\{\|\Delta
L_m\|(e_x[m]\cos\phi+e_y[m]\sin\phi)\right\}$ and another which is
error term $\delta \tilde{E}$ resulted by the discretization.

So,
\begin{equation}
\label{eq:lineIntegralWithErrorsFormulationsFinal}
I_{L_{\rho,\phi}}=\sum_{\Delta L_{m}} \left\{\|\Delta
L_m\|(e_x[m]\cos\phi+e_y[m]\sin\phi)\right\}+\delta\tilde{E}
\end{equation}
For the computational estimation of the line integrals, a finite
number $N$ of samples $\textbf{E}(x_k,y_k)$ along line
$L_{\rho,\phi}$ are assigned to the discrete values $\textbf{E}[m]$
based on an interpolation scheme (e.g. ``rough'' nearest neighbor).
Thus, the approximated line integral equation has the form
\begin{equation}\label{eq:ApproximationOflineIntegral}
\begin{split}
\tilde{I}_{L_{\rho,\phi}}=&\sum_{\Delta L_{m}}\left\{N_{m}\Delta
r(e_x[m]\cos\phi +e_y[m]\sin\phi)\right\}=\\=& \sum_{\Delta
L_{m}}\{(\|\Delta L_{m}\|-\varepsilon_{m}\Delta r) (e_x[m]\cos\phi +
e_y[m]\sin\phi)\}=\\=& \sum_{\Delta L_{m}}\{\|\Delta L_{m}\|
(e_x[m]\cos\phi + e_y[m]\sin\phi)\}-\varepsilon(\Delta r)
\end{split}
\end{equation}
Where
\begin{equation}\label{eq:samplingErrorFormula}
\varepsilon(\Delta r)=\sum_{\Delta L_{m}}\varepsilon_{m}\Delta
r\{e_x[m]\cos\phi + e_y[m]\sin\phi\}
\end{equation} is the
sampling error. Taking the difference between $I_{\rho,\phi}$
(eq.\ref{eq:lineIntegralWithErrorsFormulationsFinal}) and
$\tilde{I}_{\rho,\phi}$ (eq.\ref{eq:ApproximationOflineIntegral})
\begin{equation}\boxed{\label{eq:FinalErrors_Difference betweenIntegrals}
I_{\rho,\phi}-\tilde{I}_{\rho,\phi}=\delta\tilde{E}+\varepsilon(\Delta
r)}
\end{equation}
There are two different types of error, $\delta\tilde{E}$ which is
resulted by the discretization process and $\varepsilon(\Delta r)$
which is caused by sampling step $\Delta r>0$ along the integral
line. Obviously, as the sampling step $\Delta r\rightarrow 0$,
$\varepsilon(\Delta r)$ is eliminated.

\subsection{A Priori Error Estimate}
The numerical estimation of the field is based on the solution of
the linear system \ref{eq:SystemOfLinearEquations1} (subsection
\ref{subsection:NumericalImplementation}) in a Least Square (LS)
sense employing the numerical approximation of the line integrals.

For a grid $N\times N$ with $4N$ known boundary values, the linear
system has $6N^2$ equations and the unknown vector components are
$2N^2$. Hence, the LS system is expressed as
\begin{equation}\label{eq:LSLinearSystem}
\textbf{b}=\bar{\textbf{A}}\bar{\textbf{x}}_{LS}\end{equation}
\begin{center}
\end{center}
\begin{equation}\label{system:PerturbedSystem}
\left[ \begin{array}{c} I_{L_{\rho_1,\phi_1}} \\ I_{L_{\rho_2,\phi_2}}\\
\dots\\I_{L_{\rho_p,\phi_p}}
\end{array} \right] =
\begin{bmatrix}
\bar{a}_{11} & \bar{a}_{12}&\dots &\bar{a}_{1l/2+1}&\dots& \bar{a}_{1l}\\
 \bar{a}_{21}& \bar{a}_{22}&\dots & \dots &\dots & \dots\\
 \dots &\dots &\dots & \dots & \dots &\dots\\
 \dots & \dots&\dots & \dots& \dots&\bar{a}_{pl}
\end{bmatrix} \left[
\begin{array}{c} e_{x_{LS}}[1] \\ e_{x_{LS}}[2] \\ \vdots \\e_{y_{LS}}[1]\\ \vdots\end{array} \right]\end{equation}
with $l=2N^2$ and $l\ll p=6N^2$ i.e. it is an over-determined system
and
%\lim_{\Delta r\rightarrow0}
\begin{itemize}
  \item  $\textbf{b}=[I_{L_{\rho_1,\phi_1}}, I_{L_{\rho_2,\phi_2}}, \dots,
I_{L_{\rho_p,\phi_p}}]^T$ are the observed measurements without any
additional noise.
  \item $\textbf{x}=
  \left[e_{x_{LS}}[1],\dots,e_{x_{LS}}[l/2],e_{y_{LS}}[1],\dots,
e_{y_{LS}}[l]\right]^T=[\textbf{E}_x|\textbf{E}_y]^T$ are the field
values to be recovered.
  \item $\bar{\textbf{A}}$ has the coefficients
\begin{equation*} \bar{a}_{ku} = \left\{ \begin{array}{lll}
         (\|\Delta
        L_{u}\|_k-\varepsilon_{ku}\Delta r)\cos\phi_k& \mbox{for $1\leq u\leq l/2$ if $\exists\,\,\Delta L_{uk}\subseteq L_{\rho_k,\phi_k}$}\\
        (\|\Delta L_{u-l/2}\|_k-\varepsilon_{ku-l/2}\Delta
        r)\sin\phi_k  & \mbox{for $l/2+1\leq u\leq l$ if $\exists\,\,\Delta L_{uk}\subseteq L_{\rho_k,\phi_k}$ } \\
        0 & \mbox{if $\not\exists\,\,\Delta L_{uk}\subseteq L_{\rho_k,\phi_k}$}    \end{array} \right.
        \end{equation*}
 and  $1\leq k \leq p$.
\end{itemize}
%\item and the discretization error is $\delta\tilde{\textbf{E}}_{p\times1}=
%[\delta\tilde{E}_{1},\dots ,\delta\tilde{E}_p]^T$
\paragraph{$\bar{\textbf{A}}$: the perturbed transfer matrix\\}
Matrix $\bar{\textbf{A}}$ of the linear system can be written as
$\bar{\textbf{A}}=\textbf{A}-\delta\textbf{A}$ where $\textbf{A}$ is
the ``unperturbed'' transfer matrix with elements
$a_{ku}=\bar{a}_{ku}-\varepsilon_{ku}\Delta r \cos\phi_k=\|\Delta
        L_{u}\|_k\Delta r \cos\phi_k$ for
$1\leq u\leq 1/2$ and $a_{ku}=\bar{a}_{ku}-\varepsilon_{ku}\Delta r
\sin\phi_k=\|\Delta L_{u}\|_k\Delta r\sin\phi_k$ for $l/2+1\leq
u\leq l$ and $\delta{\textbf{A}}$ is a perturbation of matrix
$\textbf{A}$ due to $\Delta r>0$ (eq.\ref{eq:samplingErrorFormula})
and has elements of the form $\varepsilon_{ku}\Delta r \cos\phi_k$
and $\varepsilon_{ku}\Delta r \cos\phi_k$.

The validation of the LS solution is performed by theoretically
estimating the relative error (RE)
\begin{equation}\label{relativeError}
\frac{\|\textbf{x}_{exact}-\bar{\textbf{x}}_{LS}\|}{\|\textbf{x}_{exact}\|}
\end{equation}
 using the $\|.\|$
\emph{Euclidean norm} \cite{MatrixAndLinearAlgebra} where
$\textbf{x}_{exact}$ is the column vector with the real values of
the field  while $\bar{\textbf{x}}_{LS}$ is the least square
solution of system \ref{eq:LSLinearSystem}.

The column vector $\textbf{x}_{exact}$ derives from the set of line
integrals
$I_{L_{\rho_1,\phi_1}},I_{L_{\rho_2,\phi_2}}\dots,I_{L_{\rho_p,\phi_p}}$
given by equation \ref{eq:lineIntegralWithErrorsFormulationsFinal}
which form the system
\begin{equation}
\textbf{b}=\textbf{A} \textbf{x}_{exact}+\delta\tilde{\textbf{E}}\
\label{system:UnperturbedCompactForm}\end{equation} where
$\delta\tilde{\textbf{E}}= [\delta\tilde{E}_{1},\dots
,\delta\tilde{E}_p]^T$ is a column vector with the discretization
error (eq.\ref{eq:discretizationError}) of equations
\ref{eq:lineIntegralWithErrorsFormulationsFinal}, $\textbf{b}$ are
the observed measurements and $\textbf{A}$ the ``unperturbed''
transfer matrix.

Next the following \emph{lemma} is proven.
\begin{lem}\label{RE_formulation}The relative solution error (RE) of systems
$\bar{\textbf{A}}\textbf{x}_{LS}=\textbf{b}$ and
$\textbf{A}\textbf{x}_{exact}+\delta\tilde{\textbf{E}}=\textbf{b}$
is given by
\begin{equation}\frac{\|\textbf{x}_{exact}-\bar{\textbf{x}}_{LS}\|}{\|\textbf{x}_{exact}\|}\leq
e_A\|\bar{\textbf{A}}^\dag\|_2\|\textbf{A}\|_2+e_Ak(\textbf{A})+\frac{e_b\|\bar{\textbf{A}}^\dag\|_2
\|\textbf{b}\|}{\|\textbf{x}_{exact}\|}
\end{equation}
where $\textbf{A}$ and $\bar{\textbf{A}}$ $\in \Re^{p\times l}$ with
$p>l$, $\bar{\textbf{A}}=\textbf{A}-\delta\textbf{A}$,
$\|\delta\textbf{A}\|_2=e_A\|\textbf{A}\|_2\neq\textbf{0}$ and
$\|\delta\tilde{\textbf{E}}\|=e_b\|\textbf{b}\|\neq\textbf{0}$ with
$Rank(\bar{\textbf{A}})\geq Rank(\textbf{A})=v\leq p$ (see Appendix
\ref{Appen:RankOfPerturbedMatrix}).

Moreover,
$\|\textbf{A}\|_2=max_{\|\textbf{x}\|=1}\|\textbf{A}\textbf{x}\|=\sqrt{\sigma_{max}}$
is the matrix \emph{2-norm} of $\textbf{A}$ with $\sigma_{max}$ the
maximum singular value of $\textbf{A}$ and
$\|\textbf{A}^\dag\|_2=\frac{1}{\sqrt{\sigma_{min}}}$ where
$\sigma_{min}$ the minimum singular value and $k(\textbf{A})$ the
condition number defined as
$k(\textbf{A})=\|\textbf{A}\|_2\|\textbf{A}^\dag\|_2$ and
 where symbol $\dag$
refers to the pseudo-inverse of the rectangular matrices
$\textbf{A}$ and $\bar{\textbf{A}}$ such as
$\textbf{A}^\dag=(\textbf{A}^T\textbf{A})^{-1}\textbf{A}^T$
\end{lem}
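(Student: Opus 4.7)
The plan is to reduce the error $\bar{\textbf{x}}_{LS} - \textbf{x}_{exact}$ to a single expression in which the matrix perturbation $\delta\textbf{A}$ and the discretisation residual $\delta\tilde{\textbf{E}}$ appear linearly, and then apply elementary matrix-norm inequalities.

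First I would start from the pseudo-inverse representation $\bar{\textbf{x}}_{LS} = \bar{\textbf{A}}^\dag \textbf{b}$. The fact that the lemma writes $\textbf{A}^\dag = (\textbf{A}^T\textbf{A})^{-1}\textbf{A}^T$, together with the appendix hypothesis $\text{Rank}(\bar{\textbf{A}}) \geq \text{Rank}(\textbf{A})$, means both matrices must be treated as having full column rank, so the left-inverse identity $\bar{\textbf{A}}^\dag \bar{\textbf{A}} = I$ holds. Substituting $\textbf{b} = \textbf{A}\textbf{x}_{exact} + \delta\tilde{\textbf{E}}$ from (\ref{system:UnperturbedCompactForm}) and $\textbf{A} = \bar{\textbf{A}} + \delta\textbf{A}$ into $\bar{\textbf{x}}_{LS} - \textbf{x}_{exact} = \bar{\textbf{A}}^\dag(\textbf{b} - \bar{\textbf{A}}\textbf{x}_{exact})$, I would obtain the clean residual identity
\[
\bar{\textbf{x}}_{LS} - \textbf{x}_{exact} \;=\; \bar{\textbf{A}}^\dag \bigl( \delta\textbf{A}\,\textbf{x}_{exact} + \delta\tilde{\textbf{E}} \bigr).
\]

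Next I would apply the triangle inequality and the submultiplicative property of $\|\cdot\|_2$, substitute $\|\delta\textbf{A}\|_2 = e_A\|\textbf{A}\|_2$ and $\|\delta\tilde{\textbf{E}}\| = e_b\|\textbf{b}\|$, and divide by $\|\textbf{x}_{exact}\|$; this step yields directly the first term $e_A\|\bar{\textbf{A}}^\dag\|_2\|\textbf{A}\|_2$ and the third term $e_b\|\bar{\textbf{A}}^\dag\|_2\|\textbf{b}\|/\|\textbf{x}_{exact}\|$ of the claimed bound. To produce the middle term $e_A k(\textbf{A}) = e_A\|\textbf{A}\|_2\|\textbf{A}^\dag\|_2$ I would decompose the matrix-perturbation contribution by inserting $\textbf{A}^\dag\textbf{A}=I$ and splitting $\bar{\textbf{A}}^\dag\delta\textbf{A}\,\textbf{x}_{exact} = \textbf{A}^\dag\delta\textbf{A}\,\textbf{x}_{exact} + (\bar{\textbf{A}}^\dag - \textbf{A}^\dag)\delta\textbf{A}\,\textbf{x}_{exact}$: the ``unperturbed'' first piece contributes exactly $\|\textbf{A}^\dag\|_2\|\delta\textbf{A}\|_2\|\textbf{x}_{exact}\| = e_A k(\textbf{A})\|\textbf{x}_{exact}\|$, while the remainder is absorbed into the $e_A\|\bar{\textbf{A}}^\dag\|_2\|\textbf{A}\|_2$ contribution already generated.

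The main obstacle I expect is the rank hypothesis. The chapter has just emphasised that $\textbf{A}$ is \emph{nearly} rank-deficient, so the invertibility of $\textbf{A}^T\textbf{A}$ is delicate and must be justified through the appendix result on $\text{Rank}(\bar{\textbf{A}})\geq\text{Rank}(\textbf{A})$ before $\bar{\textbf{A}}^\dag\bar{\textbf{A}} = I$ and $\textbf{A}^\dag\textbf{A}=I$ can be safely invoked. Once this is secured, the rest is a routine chain of norm inequalities; the physical content of the bound, namely that $\|\bar{\textbf{A}}^\dag\|_2$ and the condition number $k(\textbf{A})$ amplify the discretisation and sampling errors, dovetails with the ill-conditioning diagnosis of Section \ref{subsection:IllPosednessConsideration} and motivates regarding the discretisation itself as a ``self-regulariser''.
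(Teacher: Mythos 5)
Your argument is correct and reaches the stated bound, but it is genuinely different from the paper's proof. You exploit the full-column-rank identity $\bar{\textbf{A}}^\dag\bar{\textbf{A}}=\textbf{I}$ to collapse the error to the single residual identity $\bar{\textbf{x}}_{LS}-\textbf{x}_{exact}=\bar{\textbf{A}}^\dag(\delta\textbf{A}\,\textbf{x}_{exact}+\delta\tilde{\textbf{E}})$, after which two applications of submultiplicativity give $e_A\|\bar{\textbf{A}}^\dag\|_2\|\textbf{A}\|_2+e_b\|\bar{\textbf{A}}^\dag\|_2\|\textbf{b}\|/\|\textbf{x}_{exact}\|$ --- a bound that is in fact \emph{sharper} than the lemma's, so the claimed inequality follows a fortiori because the middle term is nonnegative. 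The paper instead invokes Wedin's decomposition of $\textbf{A}^\dag-\bar{\textbf{A}}^\dag$ and retains the null-space projector $(\textbf{I}-\bar{\textbf{A}}^\dag\bar{\textbf{A}})$ throughout; the middle term $e_Ak(\textbf{A})$ arises there precisely from bounding $(\textbf{I}-\bar{\textbf{A}}^\dag\bar{\textbf{A}})\delta\textbf{A}^T(\textbf{A}^\dag)^T\textbf{x}_{exact}$, a contribution that vanishes identically under your assumption $\bar{\textbf{A}}^\dag\bar{\textbf{A}}=\textbf{I}$. What the paper's route buys is robustness to a possibly column-rank-deficient $\bar{\textbf{A}}$ (the situation the chapter is worried about as the grid is refined); what your route buys is brevity and a tighter constant when $\bar{\textbf{A}}$ does have full column rank, which the lemma's own definition $\textbf{A}^\dag=(\textbf{A}^T\textbf{A})^{-1}\textbf{A}^T$ and the appendix rank inequality effectively presuppose.

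One step you should drop or repair: the attempt to ``produce'' the middle term by splitting $\bar{\textbf{A}}^\dag\delta\textbf{A}\,\textbf{x}_{exact}=\textbf{A}^\dag\delta\textbf{A}\,\textbf{x}_{exact}+(\bar{\textbf{A}}^\dag-\textbf{A}^\dag)\delta\textbf{A}\,\textbf{x}_{exact}$ does not work as described, because absorbing the remainder requires $\|\bar{\textbf{A}}^\dag-\textbf{A}^\dag\|_2\leq\|\bar{\textbf{A}}^\dag\|_2$, which is false in general; carried out honestly via the triangle inequality it would yield an extra $e_Ak(\textbf{A})$ and overshoot the stated bound. This is harmless only because the splitting is unnecessary: your direct estimate already dominates, and you should simply state that the lemma's right-hand side exceeds the bound you obtained.
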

\begin{proof}
For the estimation of the relative error (RE) we employ the
decomposition theorem $\textbf{A}^\dag-\bar{\textbf{A}}^\dag$
discussed in \cite{Wedin} and \cite{Wei1989}.

According to the decomposition theorem \cite{Wedin},
\[\textbf{A}^\dag-\bar{\textbf{A}}^\dag=-\bar{\textbf{A}}^\dag(\textbf{A}-\bar{\textbf{A}})\textbf{A}^\dag-
\bar{\textbf{A}}^\dag(\textbf{I}-\textbf{A}\textbf{A}^\dag)+(\textbf{I}-\bar{\textbf{A}}^\dag\bar{\textbf{A}})\textbf{A}^\dag
\]%(Proof of the decomposition is straight forward.)
Therefore, if
$\bar{\textbf{b}}=\textbf{b}-\delta\tilde{\textbf{E}}$,\begin{equation}\label{eq:DecompositionTheorem_x_ls_x_exact}\begin{split}
 \textbf{x}_{exact}-\bar{\textbf{x}}_{LS}&=
\textbf{A}^\dag \bar{\textbf{b}}- \bar{\textbf{A}}^\dag
\textbf{b}=\textbf{A}^\dag \bar{\textbf{b}}- \bar{\textbf{A}}^\dag
(\bar{\textbf{b}}+\delta\tilde{\textbf{E}})=(\textbf{A}^\dag -
\bar{\textbf{A}}^\dag)\bar{\textbf{b}}-\bar{\textbf{A}}^\dag\textbf{e}
=\\&=
\left[-\bar{\textbf{A}}^\dag(\textbf{A}-\bar{\textbf{A}})\textbf{A}^\dag-
\bar{\textbf{A}}^\dag(\textbf{I}-\textbf{A}\textbf{A}^\dag)+
(\textbf{I}-\bar{\textbf{A}}^\dag\bar{\textbf{A}})\textbf{A}^\dag
\right]\bar{\textbf{b}}-\bar{\textbf{A}}^\dag\delta\tilde{\textbf{E}}
\end{split}
\end{equation}
$(\textbf{I}-\textbf{A}\textbf{A}^\dag)\bar{\textbf{b}}=\textbf{0}$
as $\bar{\textbf{b}}\in Range(\textbf{A})$ (i.e
$\bar{\textbf{b}}=\textbf{A}\textbf{x}_{exact}$).

So, equation \ref{eq:DecompositionTheorem_x_ls_x_exact} becomes
\begin{equation}\label{eq:DecompositionTheorem_x_ls_x_exact2}
\begin{split}
\textbf{x}_{exact}-\bar{\textbf{x}}_{LS}&=
\left[-\bar{\textbf{A}}^\dag(\textbf{A}-\bar{\textbf{A}})
\textbf{A}^\dag+(\textbf{I}-\bar{\textbf{A}}^\dag\bar{\textbf{A}})
\textbf{A}^\dag\right]\bar{\textbf{b}}-\bar{\textbf{A}}^\dag
\delta\tilde{\textbf{E}}=\\&=-\bar{\textbf{A}}^\dag
\delta\textbf{A}\textbf{x}_{exact}-\bar{\textbf{A}}^\dag
\delta\tilde{\textbf{E}}+(\textbf{I}-\bar{\textbf{A}}^\dag\bar{\textbf{A}})\textbf{A}^\dag
\bar{\textbf{b}}
\end{split}
\end{equation}

Moore Penrose pseudo-inverse identities
$\textbf{A}^\dag=\textbf{A}^\dag\textbf{A}\textbf{A}^\dag=
({\textbf{A}^\dag}\textbf{A})^T\textbf{A}^\dag=\textbf{A}^T(\textbf{A}^{\dag})^T\textbf{A}^\dag$
\cite{MatrixAndLinearAlgebra}\cite{Wedin} imply that
$\textbf{A}^\dag\bar{\textbf{b}}=\textbf{A}^T(\textbf{A}^{\dag})
^T\textbf{A}^\dag\bar{\textbf{b}}=\textbf{A}^T(\textbf{A}^{\dag})^T\textbf{x}_{exact}$.

Hence, equation \ref{eq:DecompositionTheorem_x_ls_x_exact2} becomes
\begin{equation}
\begin{split}\label{eq:DecompositionTheorem_x_ls_x_exact3}
\textbf{x}_{exact}-\bar{\textbf{x}}_{LS}&=-\bar{\textbf{A}}^\dag
\delta\textbf{A}\textbf{x}_{exact}-\bar{\textbf{A}}^\dag
\delta\tilde{\textbf{E}}+(\textbf{I}-\bar{\textbf{A}}^
\dag\bar{\textbf{A}})\textbf{A}^T(\textbf{A}^{\dag})^{T}\textbf{x}_{exact}=\\
&=-\bar{\textbf{A}}^\dag
\delta\textbf{A}\textbf{x}_{exact}-\bar{\textbf{A}}^\dag
\textbf{e}+(\textbf{I}-\bar{\textbf{A}}
^\dag\bar{\textbf{A}})(\bar{\textbf{A}}+\delta\textbf{A})^T(\textbf{A}^{\dag})^{T}\textbf{x}_{exact}\\
\end{split}
\end{equation}
$(\textbf{I}-\bar{\textbf{A}}^\dag\bar{\textbf{A}})\bar{\textbf{A}}^T=\textbf{0}$
i.e. $\bar{\textbf{A}}^T \in \,\, N
(\bar{\textbf{A}}^\dag\bar{\textbf{A}})$
%(The null space of a matrix
%$\textbf{B}$ is defined as $N(\textbf{B}) = \{\textbf{u} |
%\textbf{B}\textbf{u} = 0\}$ )
as
$\bar{\textbf{A}}^T=\bar{\textbf{A}}^\dag\bar{\textbf{A}}\bar{\textbf{A}}^T$\cite{MatrixAndLinearAlgebra}.

Thus,
\begin{equation}
\begin{split}\label{eq:DecompositionTheorem_x_ls_x_exact4}
 \textbf{x}_{exact}-\bar{\textbf{x}}_{LS}&=-\bar{\textbf{A}}^\dag
\delta\textbf{A}\textbf{x}_{exact}-\bar{\textbf{A}}^\dag
\delta\tilde{\textbf{E}}+(\textbf{I}-\bar{\textbf{A}}^\dag\bar
{\textbf{A}})\delta\textbf{A}^T(\textbf{A}^{\dag})^{T}\textbf{x}_{exact}\end{split}
\end{equation}
According to the matrix norm inequalities of \emph{2-norm} $\|.\|_2$
\cite{MatrixAndLinearAlgebra} for matrices $\textbf{B}$ and
$\textbf{C}$ $\in\,\,\Re^{m\times n}$
$\|\textbf{B}\textbf{C}\|_2\leq\|\textbf{B}\|_2\|\textbf{C}\|_2$ and
$\|\textbf{B}+\textbf{C}\|_2\leq\|\textbf{B}\|_2+\|\textbf{C}\|_2$
and $\|\textbf{B}\textbf{x}\|_2\leq\|\textbf{B}\|_2\|\textbf{x}\|$
when $\textbf{x}$ is a vector $\in\,\,\Re^{n\times 1}$.

So,
\begin{equation*}\begin{split}
\frac{\|\textbf{x}_{exact}-\bar{\textbf{x}}_{LS}\|}{\|\textbf{x}_{exact}\|}&=
\frac{\|-\bar{\textbf{A}}^\dag \delta\textbf{A}\textbf{x}_{exact}
+(\textbf{I}-\bar{\textbf{A}}^\dag\bar{\textbf{A}})\delta\textbf{A}^T
(\textbf{A}^{\dag})^{T}\textbf{x}_{exact}-\bar{\textbf{A}}^\dag
\delta\tilde{\textbf{E}}\|_2}{\|\textbf{x}_{exact}\|}\leq\\&\leq
\frac{\|-\bar{\textbf{A}}^\dag
\delta\textbf{A}\textbf{x}_{exact}\|_2+\|(\textbf{I}-
\bar{\textbf{A}}^\dag\bar{\textbf{A}})\delta\textbf{A}^T(\textbf{A}^{\dag})^{T}\textbf{x}_{exact}
\|_2+\|-\bar{\textbf{A}}^\dag
\delta\tilde{\textbf{E}}\|_2}{\|\textbf{x}_{exact}\|}\leq\\&\leq
\|\delta\textbf{A}\|_2 \|\bar{\textbf{A}}^\dag\|_2+
\|(\textbf{I}-\bar{\textbf{A}}^\dag\bar{\textbf{A}})\|_2\|\delta\textbf{A}^T\|_2\|\textbf{A}^{\dag}\|_2
+\frac{\|\bar{\textbf{A}}^\dag\|_2
\|\delta\tilde{\textbf{E}}\|}{\|\textbf{x}_{exact}\|}\leq\\&\leq\|\delta\textbf{A}\|_2
\|\bar{\textbf{A}}^\dag\|_2+\|\delta\textbf{A}\|_2\|\textbf{A}^{\dag}\|_2+\frac{\|\bar{\textbf{A}}^\dag\|_2
\|\delta\tilde{\textbf{E}}\|}{\|\textbf{x}_{exact}\|}
 \end{split}
\end{equation*}
if we set $\|\delta\textbf{A}\|_2=e_A\|\textbf{A}\|_2$,
$\|\delta\tilde{\textbf{E}}\|=e_b\|\textbf{b}\|$ and
$k(\textbf{A})=\|\textbf{A}^\dag\|_2\|\textbf{A}\|_2$ the condition
number of $\textbf{A}$ then \[ \boxed{
\frac{\|\textbf{x}_{exact}-\bar{\textbf{x}}_{LS}\|}{\|\textbf{x}_{exact}\|}\leq
e_A\|\bar{\textbf{A}}^\dag\|_2\|\textbf{A}\|_2+e_Ak(\textbf{A})+\frac{e_b\|\bar{\textbf{A}}^\dag\|_2
\|\textbf{b}\|}{\|\textbf{x}_{exact}\|} }\]
\end{proof}

\subsubsection{Conditions for Bounded Relative Error}
The relative error (RE) is bounded when the sampling error is small
i.e. $e_A\rightarrow0$ and the discretization error
$\|\delta\tilde{\textbf{E}}\|=e_b\|\textbf{b}\|>\textbf{0}$ with
$0<e_b\ll1$.

If the sampling step $\Delta r\rightarrow 0$ then
$\delta\textbf{A}\approx\textbf{0}$ and
$\bar{\textbf{A}}\approx\textbf{A}$ (the sampling error is
eliminated $\varepsilon(\Delta r)\simeq 0$) and the relative error
RE is
\[\frac{\|\textbf{x}_{exact}-\bar{\textbf{x}}_{LS}\|}{\|\textbf{x}_{exact}\|}\leq
\frac{e_b\|{\textbf{A}}^\dag\|_2
\|\textbf{b}\|}{\|\textbf{x}_{exact}\|}\leq
\frac{e_b}{1-e_b}\|{\textbf{A}}^\dag\|_2\|{\textbf{A}}\|_2=\frac{e_b}{1-e_b}k(\textbf{A})\]
%If $\Delta r\rightarrow0$ and $e_b\ll1$  then
%
%\[\frac{\|\textbf{x}_{exact}-\bar{\textbf{x}}_{LS}\|}{\|\textbf{x}_{exact}\|}<k(\textbf{A})\]
When the discretization error $\|\delta\tilde{\textbf{E}}\|_2>0$
with $\|\delta\tilde{\textbf{E}}\|=e_b\|\textbf{b}\|$ and
$0<e_b\ll1$ the equations of the LS system \ref{eq:LSLinearSystem}
have the form (eq.\ref{eq:ApproximationOflineIntegral})
\[\sum_{u=1}^l\{\|\Delta L_{u}\|
(e_{x_{LS}}[u]\cos\phi + e_{y_{LS}}[u]\sin\phi)\}\] where $\|\Delta
L_u\|>0$. This formulation is an approximation of the line integral
equation (eq.\ref{eq:lineIntegralWithErrorsFormulationsFinal}) and
the summation of any set of these equations cannot be zero in any
closed path. So, the rectangular transfer matrix of LS system
\ref{eq:LSLinearSystem} is not rank deficient as there are no
linearly dependent equations and the condition number
$k(\textbf{A})<\infty$.

Thus, the relative error (RE)
\[\frac{\|\textbf{x}_{exact}-\bar{\textbf{x}}_{LS}\|}{\|\textbf{x}_{exact}\|}\leq\frac{e_b}{1-e_b}k(\textbf{A})<\infty\]
is bounded.

If the resolution of the discrete domain improves (grid refinement)
such as the discretization error
$\|\delta\tilde{\textbf{E}}\|_2\simeq0$, the linear equations are
\[I_{L_{\rho,\phi}}=\lim_{\|\Delta L_u\|\rightarrow 0}\sum_{u}\left\|\Delta
L_{u}\|(e_x[u]\cos\phi+e_y[u]\sin\phi)\right\}\rightarrow\int_{L_{\rho,\phi}}e_x\cos\phi
dl+e_y\sin\phi dl\] then the system becomes severely ill
conditioned. The condition number $k(\textbf{A})\rightarrow\infty$
due to the linear dependencies between the equations and the RE
bound tends to infinity.

Particulary, for a grid $N\times N$ ($N$ is the resolution), $4N$
boundary values and $2N^2$ unknowns, when $N\rightarrow\infty$ (i.e.
grid refinement), the number of independent equations
$F(N)\rightarrow 4(N-1)$ as the equations approximate the continues
line integrals (section \ref{subsection:IllPosednessConsideration}).
Then, the number of independent equations cannot exceed the number
of the unknowns as
$\lim_{N\rightarrow\infty}\frac{F(N)}{2N^2}=\lim_{N\rightarrow\infty}\frac{4(N-1)}{2N^2}=0$
and  system \ref{eq:LSLinearSystem} becomes rank deficient and
consistent (i.e. infinity number of solutions).

In other words rank deficiency of transfer matrix $\textbf{A}$ (e.g
$Rank(\textbf{A})=r$) implies that there are singular
values$\sigma_r\rightarrow 0$. So according to singular value
decomposition (SVD) $\textbf{A}=\textbf{U}\Sigma\textbf{V}^\dag$
where matrix $\textbf{U}$ and matrix $\textbf{V}$ are unitary, and
$\Sigma$ is an matrix whose only non-zero elements are along the
diagonal with $\sigma_i>\sigma_{i+1}\geq 0$(The columns of
$\textbf{U}$ and  $\textbf{V}$ are known as the left
$\{\textbf{u}_i\}$ and right $\{\textbf{v}_i\}$ singular vectors)
for $\sigma_r=0\,\,\Rightarrow$ $\textbf{A}u_r=0$ and thus
$\textbf{A}(\textbf{x}+a\textbf{u}_r)=\textbf{b}$ which show that
the uniqueness test fails($2^{nd}$ Hadamards criterion).

So, a theoretical relationship between the discretization and the
ill-posedness of the inverse problem has been presented.
Particularly, the discretization process is a way to regularize the
continuous ill posed problem since the discretization ensures a
finite upper bound to the solution error. This is called
self-regularization (regularization by projection) or regularization
by discretization. The proper choice of the discretization
parameters is important for the problem regularization. An extreme
coarse discretization of the domain increases the ill conditioning
of the problem in the sense that if $e_b\rightarrow1$ (high
discretization error), the RE is again unbounded, leading again to
an unstable linear system. Graphically this is presented in figure
\ref{fig:REvsDiscretizationCoefficient}
\begin{figure}[!htb]
\centering
\includegraphics[width=0.3\textwidth]{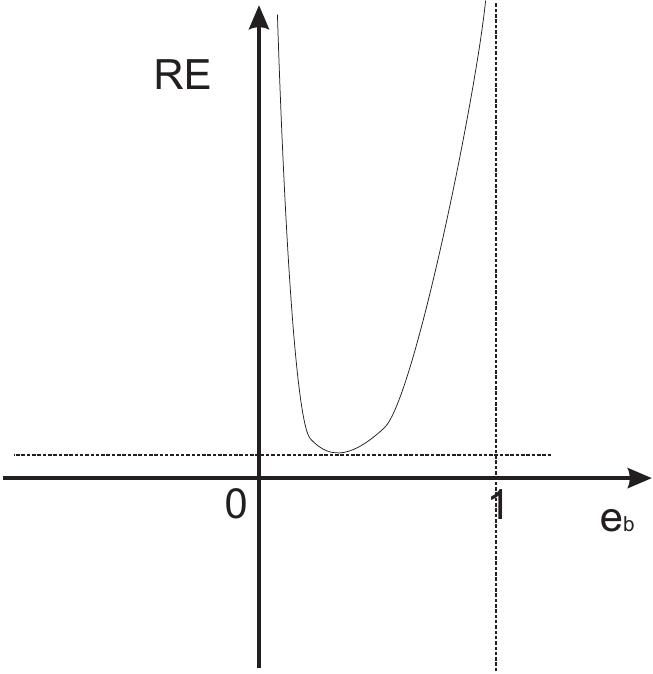} \caption{Relative Error RE vs
discretization coefficient
$e_b$}\label{fig:REvsDiscretizationCoefficient}
\end{figure}
So, a bounded discretization error $\delta\tilde{\textbf{E}}$ is
essential for an accurate solution to balance the solution
approximation error and the ill conditioning of the linear system.

\section{Summary}
The proposed method intends to approximate a vector field from a set
of line integrals. As we showed, from $4N$ boundary measurements,
the maximum number of equations which can be obtained taking all
possible pairs of boundary measurements is  $2N(4N-1)$ and the
maximum number of independent line integrals cannot exceed the
$4N-1$ equations. In the case of a square domain with $2N^2$
unknowns, this number of independent continuous line integrals is
not sufficient for the recovery of a $2D$ irrotational vector field.
However, we provided a theoretical proof showing that the
discretization can be an efficient way of regularizing the
continuous ill posed problem and therefore the problem is tractable
by solving it in the discrete domain.
 %ensuring that under certain
%conditions the numerical solution error is bounded

%%%%%%%%%%%%%%%%%%%%%%%%%%%%%%%%%%%%%%%%%%%%%%%%%%%%%%%%%%%%%
\chapter{Simulations and a Real Application}
The current chapter is divided into two main parts: in the first
part the validation and verification of the line integral method is
presented performing some basic simulations while in the second part
an introduction to the inverse bioelectric field problem is reported
as a future real application of the proposed vector field
reconstruction method.

In particular, for the evaluation of the method we perform
simulations for the approximately reconstruction of an electrostatic
field produced by electric monopoles in a $2D$ square domain based
on the vector field recovery method. According to the preconditions
and assumptions of the mathematical model, the electrostatic field
is a good example for the validation of the method as it satisfies
the quasi static condition and the irrotational property.

Moreover, the numerical implementation of the method is based on the
modeling described in subsection
\ref{subsection:NumericalImplementation} where nearest neighbor
approximation was employed. For the verification of the theoretical
predictions, the magnitude of the singular values of transfer matrix
$\textbf{A}$ of system \ref{eq:SystemOfLinearEquations1}, will
provide a measure of the invertibility and conditioning of the
numerical system \ref{eq:SystemOfLinearEquations1}. The sampling
step along the line integrals will be very small (much smaller than
the cell size) in all simulations.

Finally, the proposed vector field method will be presented as an
equivalent mathematical counterpart of the partial differential
formulation of the inverse bioelectric field problem
\cite{BFP_johnson1993}. Comparisons of the two approaches will be
made.

\section{Results}
The qualitative and quantitative evaluation of the method is very
important for examination of its robustness. The validation is
concerned with how the mathematical and geometric formulations
represent the real physical problem and the verification assesses
the accuracy with which the numerical model approximates the real
one. The mathematical and geometric properties of the experimental
model are defined in the ``Simulation Setup'' subsection while in
subsection ``Simulations using Electric Monopoles'' the evaluation
of the previous theoretical findings is performed.

\subsection{Simulation Setup}
\paragraph {\emph{Geometrical Model}\\}
In the current simulations we intend to recover a vector field in a
$2D$ square bounded domain $\Omega$. For the numerical estimation of
the field, the discretization of $\Omega$ is essential.
Discretization of the domain can be described as
\[\Omega:\{(x,y)\in[-U,U]^2\}\underbrace{\rightarrow}_{Discretization}\{[i,j]\in [1:N,1:N]\}\]
where $i=\left\lfloor \frac{x+U}{P} \right\rfloor$, $j=\left\lfloor
\frac{y+U}{P}\right\rfloor$ (nearest neighbor approximation) with
$[P\times P]$ the cell's size and $N=2U/P$ the spatial resolution
(sampling rate) of the discrete domain.
\paragraph{\emph{Mathematical Model}\\}
The numerical representation of the line integral equations is given
by
\begin{equation}\label{eq:NumericalIntegrals}
\tilde{I}_{L_{\rho,\phi}}=\sum  N_{ij}
\textbf{E}[i,j]\cdot(\cos\phi,\sin\phi)\Delta r\end{equation} where
$N_{ij}$ is the number of samples in cell $[i,j]$ of the domain and
$\Delta r$ the sampling step. The linear system of equations is
designed according to subsection \ref{subsection:Geometric Model}.

\paragraph{\emph{Source Model}\\}
A set of monopoles is employed for the production of the real
electric field. The field created by monopoles (point sources) is
given by $\textbf{E}=k\sum_i\frac{Q_i}{(r-r_i)^2}\hat{\textbf{r}}_i$
where $(r-r_i)^2$ is the distance between charge $Q_i$ and the
E-field evaluation point $\textbf{r}$ and $\hat{\textbf{r}}_i$ the
unit vector pointing from the particle with charge $Q_i$ to the
E-field point. $\Phi=k\sum_i\frac{Q_i}{|r-r_i|}$ is the potential
function for the estimation of the potential values at the
boundaries of the domain.

So, in the inverse vector field problem, one seeks to estimate field
$\textbf{E}[i,j]$ in each cell $[i,j]$ when the potential values in
the middle of the boundary edges of the boundary cells are known.

The linear system \ref{eq:SystemOfLinearEquations1} is formulated
from a set of approximated line integrals
(\ref{eq:NumericalIntegrals}), where value
$\tilde{I}_{L_{\rho,\phi}}$ is the potential difference between two
boundary points.

In most simulations, the point sources of the field were positioned
outside the bounded domain in order to avoid singularities, since in
practical problems the value of the field cannot be infinite.

In the next subsection we examine the following scenarios:

\begin{itemize}
  \item increase the resolution of the interior of the discrete
  domain and the observed boundary measurements and examine the relationship between the resolution and the ill conditioning of the system \ref{eq:SystemOfLinearEquations1};
  \item create a field using more point sources with arbitrary charges and positions
  but still outside the bounded domain for constant resolution;
  \item place the point source inside the domain.
\end{itemize}

The goal is to examine experimentally the ill conditioning of the
system for different source distributions (close and far from the
bounded domain) and the relationship between the spatial resolution
of the domain and the ill conditioning. For the examination of the
ill conditioning, the singular values of transfer matrix
$\textbf{A}$ of the linear system \ref{eq:SystemOfLinearEquations1}
are estimated performing the
singular  value decomposition (SVD) of $\textbf{A}$. %i.e.
%$\textbf{A}\textbf{x}=\textbf{U}\Sigma
%\textbf{V}\textbf{x}=\sum_{i=1}^n u_i\sigma_i(v_i^T\textbf{x})$,
%where $\sigma_i$ are the singular values.
It is known that a slowly decreasing singular value spectrum with a
broader range of nonzero singular values indicates a better
conditioning while a rapidly decreasing to zero shows increase of
ill conditioning. Moreover, a second indicator is the condition
number $k(\textbf{A})=\frac{\sigma_{max}}{\sigma_{min}}$ where
$\sigma_{max}$ and $\sigma_{min}$ are the maximum and minimum
singular values of $\textbf{A}$, respectively. The condition number
is a gauge of the transfer error from matrix $\textbf{A}$ and vector
$\textbf{b}$ (eq. \ref{eq:SystemOfLinearEquations1}) to the solution
vector $\textbf{x}$. When the condition number is close to 1, the
conditioning of the system is good and the transfer error is low.
So, small changes in $\textbf{A}$ or $\textbf{b}$ produce small
errors in $\textbf{x}$. Finally, we estimate the relative error of
the magnitude and the phase between the approximated and the real
field using \begin{equation}\label{eq:RelativeError}
RE=\frac{\|\textbf{x}_{LS}-\textbf{x}_{exact}\|_2}{\|\textbf{x}_{exact}\|_2}\end{equation}
where $\|\textbf{x}\|_2=\sqrt{\sum_i x_i^2}$.

\subsection{Simulations using Electric Monopoles}
\begin{itemize}
  \item
In the first set of simulations the sources are selected to be far
from the recovery domain in order the field to be smooth and thus to
examine only the relationship between resolution and ill
conditioning due to grid refinement.

\paragraph{$1^{st} Example\\$}
For a bounded domain $\Omega:\{(x,y)\,\, \in \,\,[-U,U]^2\subset
\Re^2\}=[-5.5,5.5]^2$ and two point sources with the same charge $Q$
placed at $(-19,0)$ and $(19,19)$ on the $x,y$ coordinate system
(far from the domain $\Omega$) and small sampling step $\Delta
r=10^{-4}U$ along the integral line, we obtain the following.

When $P\times P=1\times 1$, the cell size and grid is $11\times11$
(resolution is 11), the recovered field is depicted in figure
\ref{fig:RecoveryFieldwithP=1}B.
\begin{figure}[!htb] \centering
\includegraphics[width=0.8\textwidth]{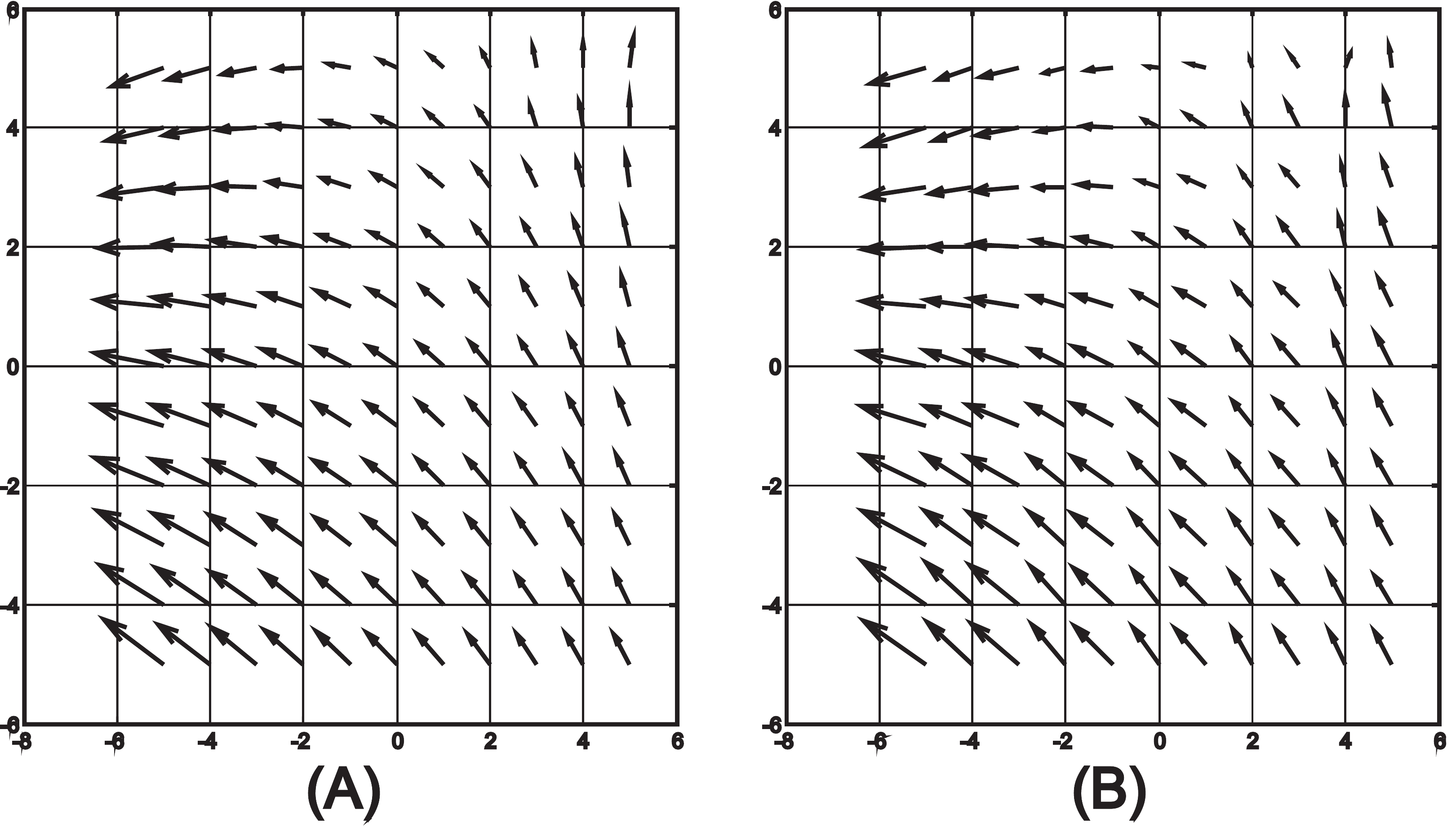} \caption{(A) Real Field and (B) Recovered Field when the grid resolution is $N=2U/P=11.$}\label{fig:RecoveryFieldwithP=1}
\end{figure}
The condition number of the transfer matrix $\textbf{A}$ of the
linear system is $k(\textbf{A})=174$ and the relative errors (eq.
\ref{eq:RelativeError}) of the magnitude and the phase between the
real and the reconstructed field are $RE_{magnitude}=0.11$ and
$RE_{phase}=0.05$ respectively.

\begin{figure}[!htb] \centering
\includegraphics[width=0.8\textwidth]{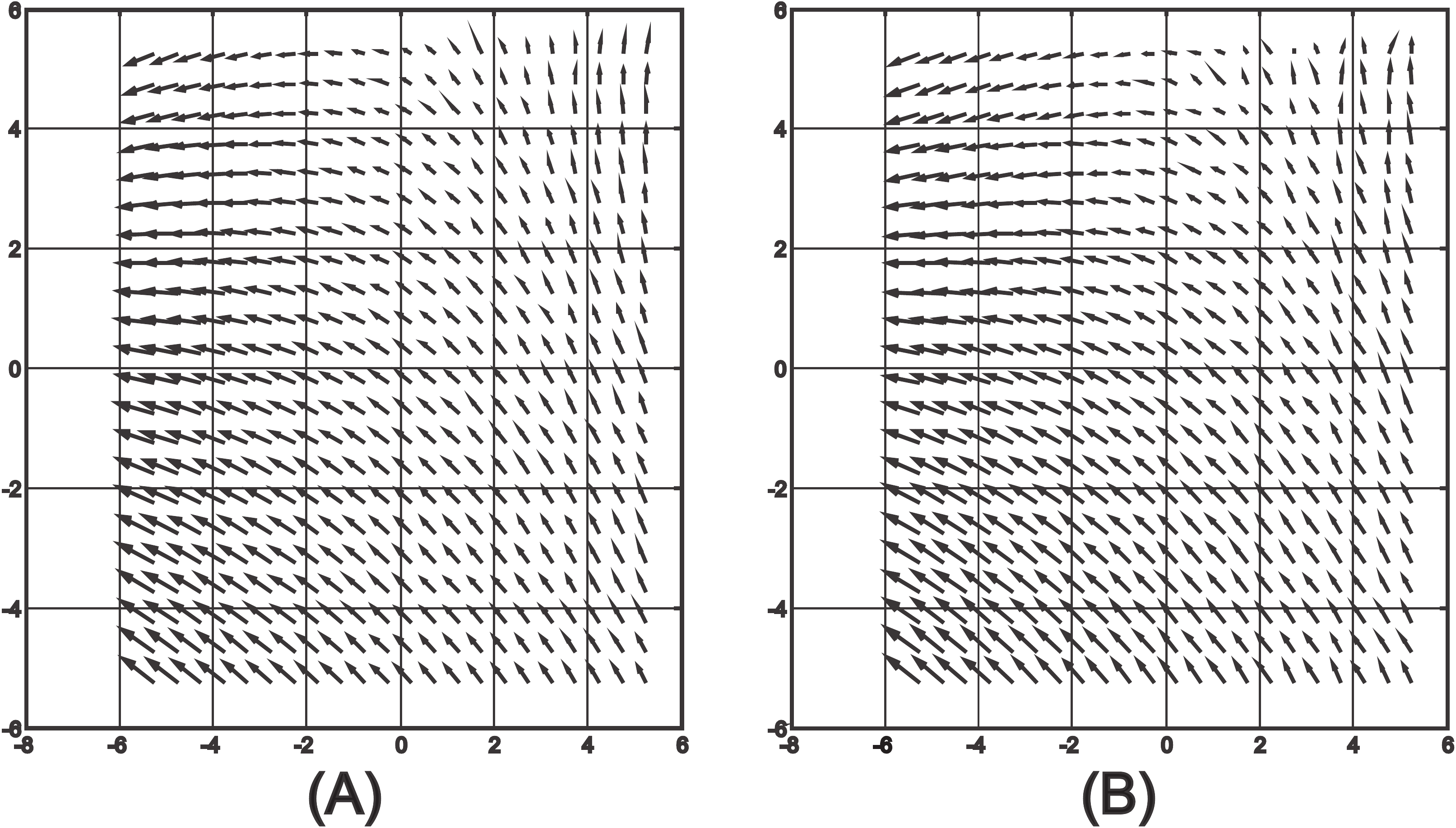} \caption{Refining the grid resolution, (A)
Real Field and (B) Recovered Field when the grid resolution is
$N=2U/P=22$.}\label{fig:RecoveryFieldwithP=0_5}
\end{figure}

When the cell size is $P=0.5$, the condition number is $
k(\textbf{A})=10^4$ and the relative errors $RE_{magnitude}=0.08$
and $RE_{phase}=0.045$. The recovered field is showed in figure
\ref{fig:RecoveryFieldwithP=0_5}B. In both cases $\Delta r$ is much
less than the cell size $P$. As we can observe the condition number
in the second case in much higher than in the first case where the
spatial resolution is ``coarse''. This implies that the solution of
 the second system $\textbf{A}\textbf{x}_{LS}=\textbf{b}$ is more sensitive and unstable to
$\textbf{A}$ or $\textbf{b}$ perturbations.

System instability means that the continuous dependence of the
solution upon the input data cannot be guaranteed and in the
presence of input noise (perturbations) the system behavior is
unpredictable. In the current simulations there is no additional
noise or other external sources of error, so the ill conditioning of
the system (intrinsic ill conditioning) due to the high condition
number $k(\textbf{A})$ of order $10^k$ has an effect on the computed
solution in the sense that a loss of $k$ digit accuracy in the
solution is applied (rule of thumb \cite{MatrixAndLinearAlgebra}).
Thus, for a floating point arithmetic (16 digits floating point
numbers are used generally in these simulations) only $16-k$ digit
solution accuracy can be achieved. So, in the absence of sources of
noise, there is mainly a loss in accuracy when the condition number
is high while there no great effect on the system's stability.

The approximation error RE is lower for the ``refined'' system (Res.
22) than that of the ``coarse'' system (Res. 11). In the ``refined''
system, there is a loss of $4$ digits in the solution accuracy which
is not so high for floating point measurements and due to grid
refinement, more spatial frequencies of the field can be recovered.
Thus, the solution of the ``refined'' system is more accurate.
However, the ``refined'' system is more unstable and if the spatial
resolution of the problem tends to the continuous case then
obviously the linear system will tend to rank deficiency.

\begin{figure}[!htb] \centering
\includegraphics[width=0.5\textwidth]{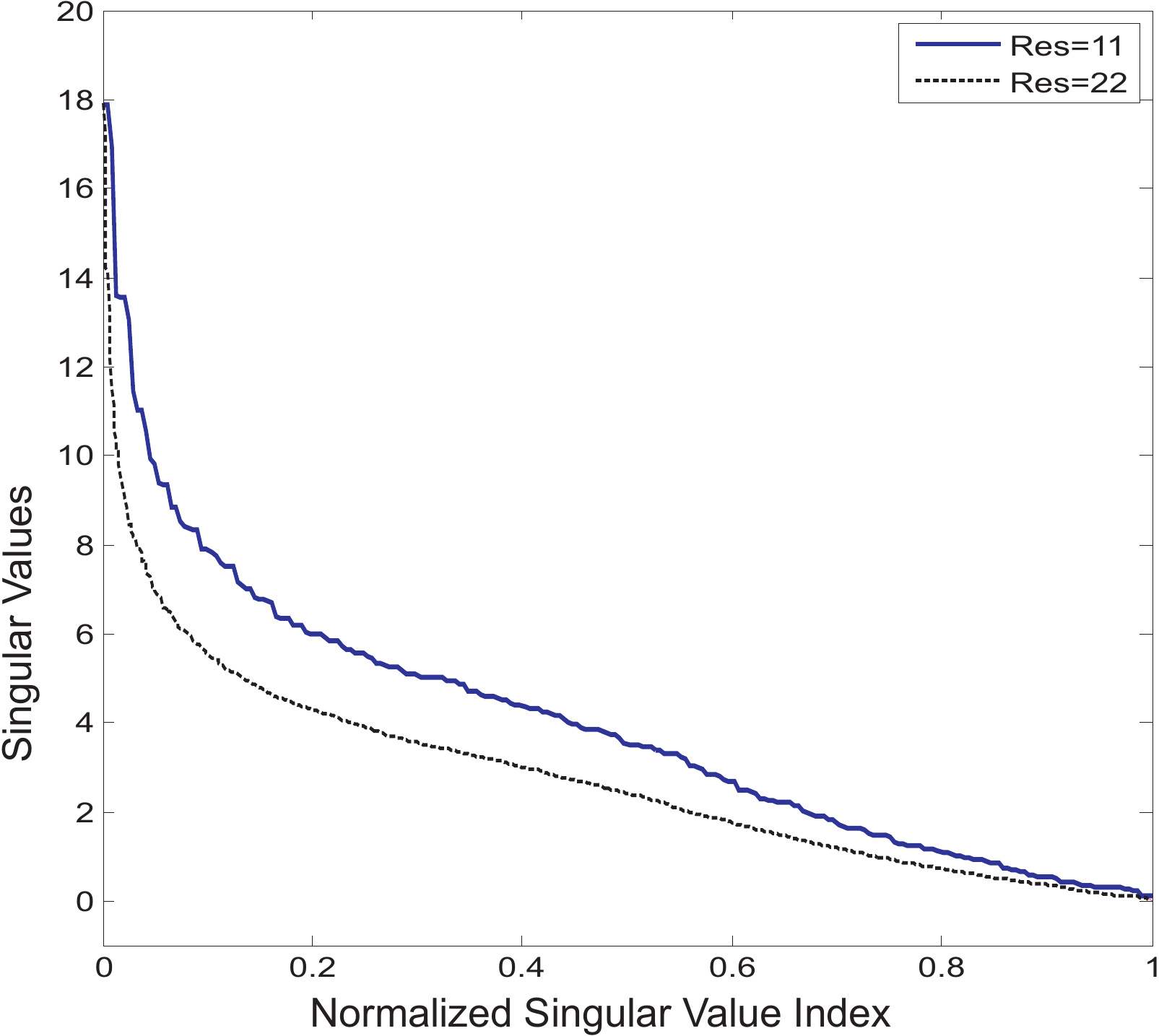} \caption{Refining the grid resolution, the spectrum of the singular
values (dotted line) descents more rapidly to zero.
}\label{fig:SingularValuesSpecturmOnly2}
\end{figure}

Moreover, grid refinement results the faster descent of the singular
values of the spectrum to zero and frequently a narrower range of
non zero singular values, which also indicate that the conditioning
of the transfer matrix $\textbf{A}$ and the stability of the linear
system deteriorates. This is clear according to figure
\ref{fig:SingularValuesSpecturmOnly2} where the continuous line
depicts the singular values spectrum when the grid resolution is
$N=U/P=11$ and the dashed line the spectrum for $N=22$.

\paragraph{$2^{nd} Example\\$}
Further results for $2$ point sources with $Q=10^{-8}$ localized
again at $(-19,0)$ and $(19,19)$ (far away from the bounded domain
$\Omega:\{(x,y)\in[-2.5,2.5]^2\}$) and 3 different cell sizes
$P=1,0.5,0.25$  are presented in table 4.1 and figure
\ref{fig:SingularValuesSpecturm}.
\begin{figure}[!htb] \centering
\includegraphics[width=0.8\textwidth]{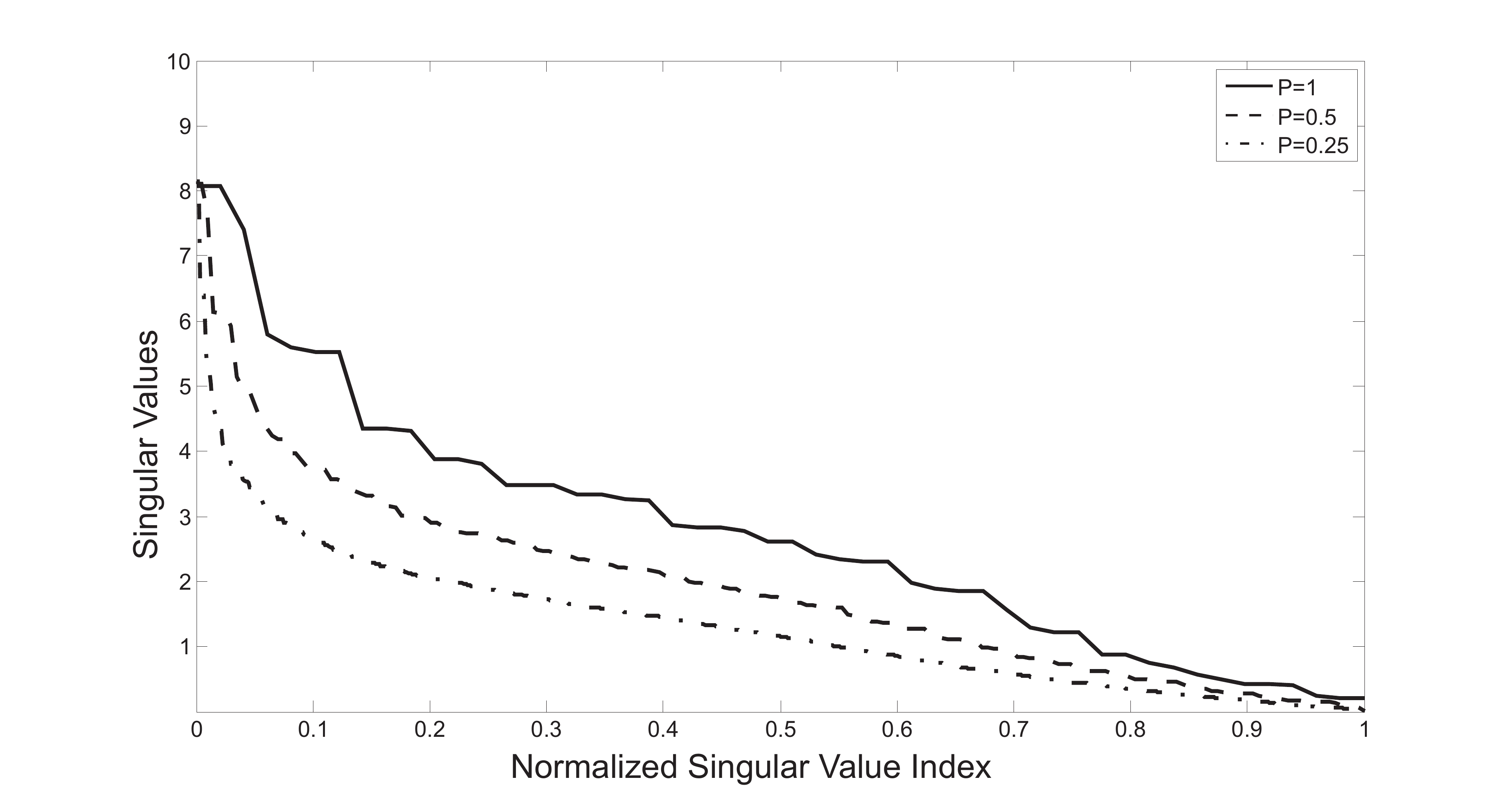} \caption{When refining the grid resolution, the spectrum of the singular
values descents more rapidly to zero and thus the conditioning of
the system deteriorates. In this example, 2 point sources with
$Q=10^{-8}$ localized at $(-19,0)$ and $(19,19)$ (far away from the
bounded domain $\Omega:\{(x,y)\in[-2.5,2.5]^2\}$) and 3 different
cell sizes were used $P=1,0.5,0.25$.
}\label{fig:SingularValuesSpecturm}
\end{figure}

\begin{table}[!htb]
\begin{center}
\begin{tabular}[!htb]{|c|c|c|c|c|}
  \hline
  % after \\: \hline or \cline{col1-col2} \cline{col3-col4} ...
   Resolution  & Cell Size & \multicolumn{3}{|c|}{Measurements } \\ \hline
   $N\times N$ & P & k(\textbf{A}) & $RE_{Magnitude}$&$RE_{Phase}$  \\
   \hline\hline
   $20\times 20$& $0.25$ & $219.2 $& 0.0168& 0.0048  \\ \hline
   $10\times 10$& $0.5$  &$93.16 $ & 0.024 & 0.008 \\ \hline
   $5\times 5$  & $1$    &$96.7$        & 0.044 & 0.013 \\
   \hline
\end{tabular}\caption{Condition number $k(\textbf{A}$) and relative errors for different resolution levels}
\end{center}\label{table:1}
\end{table}

\newpage
\item
When a large number of point sources (160 sources with arbitrary
charges) are placed far from domain $\Omega$
(fig.\ref{fig:ManyPointSources}) then $RE_{magnitude} = 0.1567$
$RE_{phase} = 0.1527$ and the recovered field is depicted in figure
\ref{fig:ManySourcesRecovery}.
\begin{figure}[!htb] \centering
\includegraphics[width=0.7\textwidth]{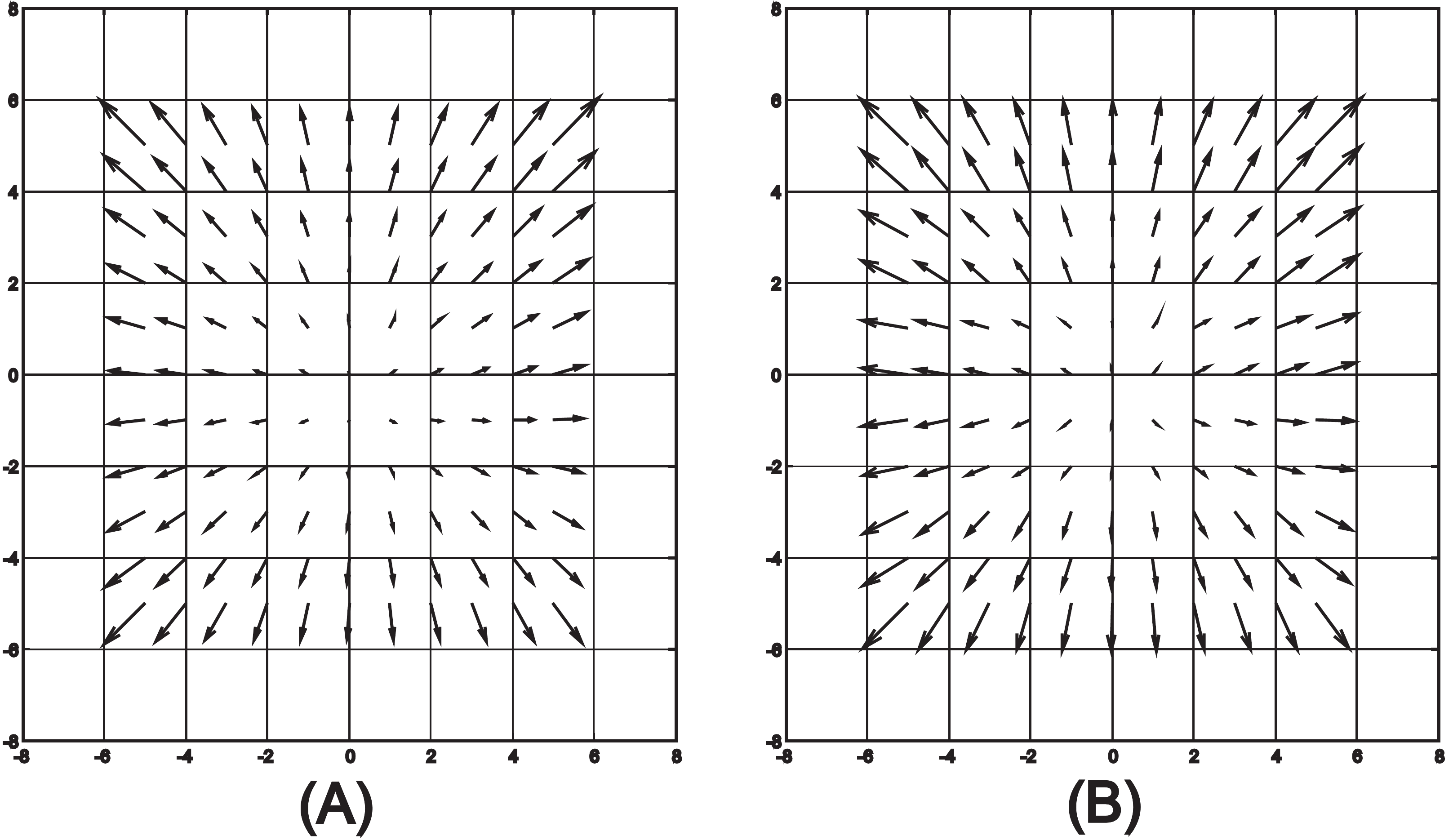} \caption{(A) Real Field and (B) Recovered Field
when the grid resolution is $N=2U/P=11$ and the real field is
produced from
 many point sources which are around the vector field domain.}\label{fig:ManySourcesRecovery}
\end{figure}
\begin{figure}[!htb] \centering
\includegraphics[width=0.4\textwidth]{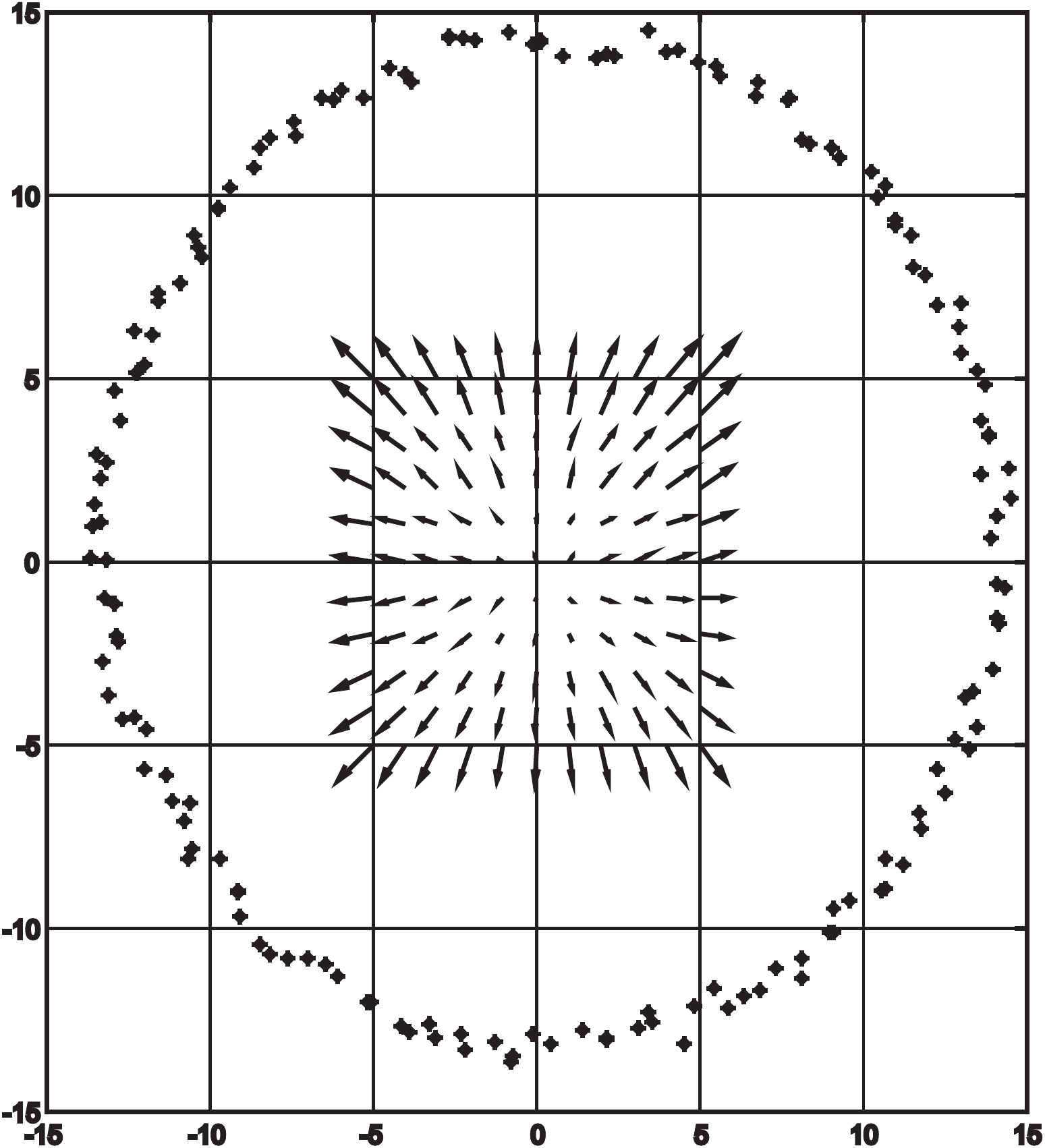} \caption{Dots depict the locations of the point
sources far from the bounded domain
 and the arrows the recovered vector field inside the bounded domain}\label{fig:ManyPointSources}
\end{figure}

If the point sources are closer to the field of interest, like in
figures \ref{fig:ManyPointSourcesCloseToDomain} and
\ref{fig:ManySourcesDistributionCloseToDomain}, and keeping constant
resolution, the relative errors are $RE_{magnitude}=0.63$ and
$RE_{phase}=0.33$. Clearly, the relative errors increase as the
field sources are placed closer to the recovery domain.
\begin{figure}[!htb] \centering
\includegraphics[width=0.7\textwidth]{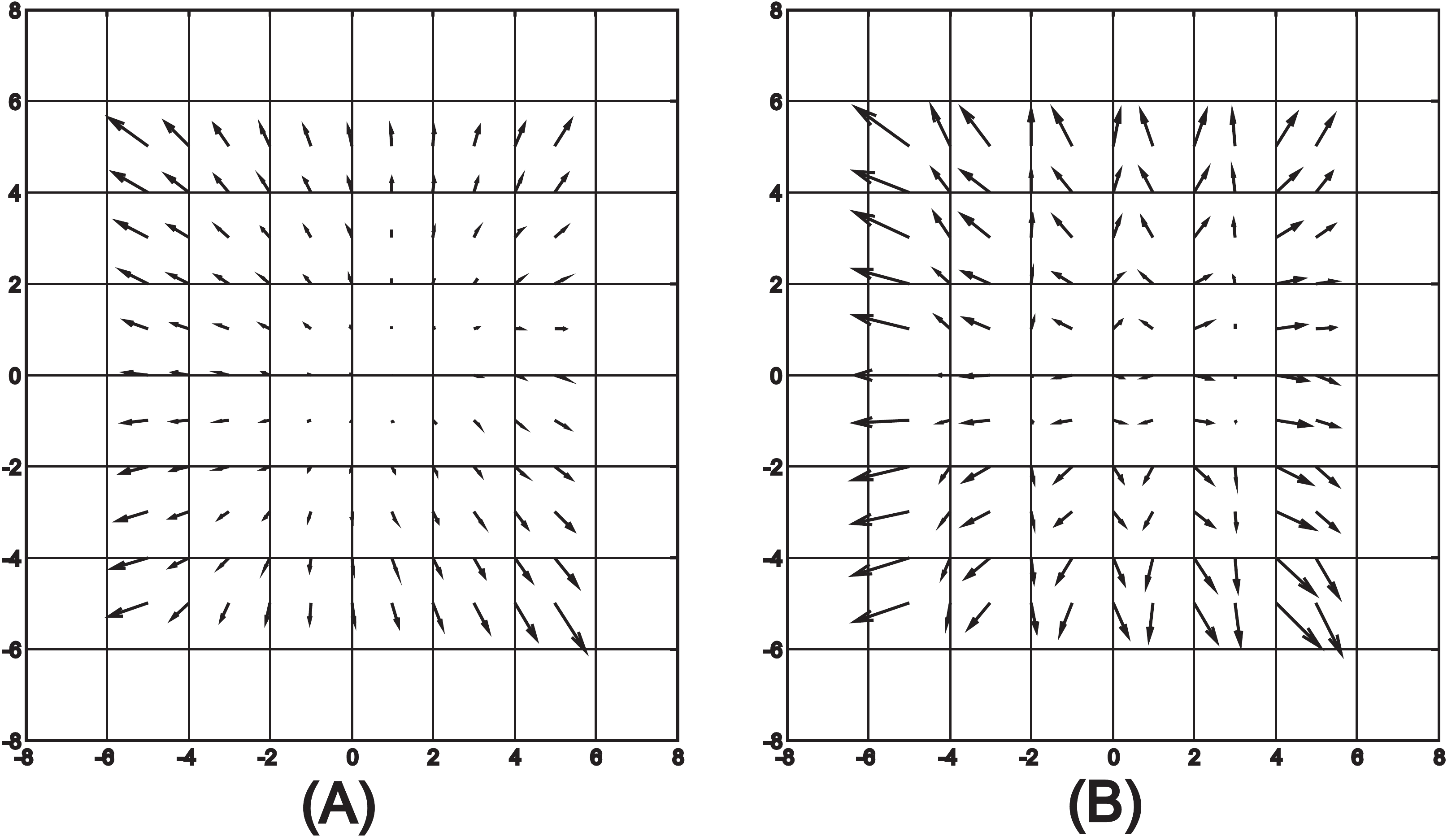} \caption{(A) Real Field
and (B) Recovered Field when the grid resolution is $N=2U/P=11$ and
the sources are closer to the bounded
domain}\label{fig:ManyPointSourcesCloseToDomain}
\end{figure}
\begin{figure}[!htb] \centering
\includegraphics[width=0.4\textwidth]{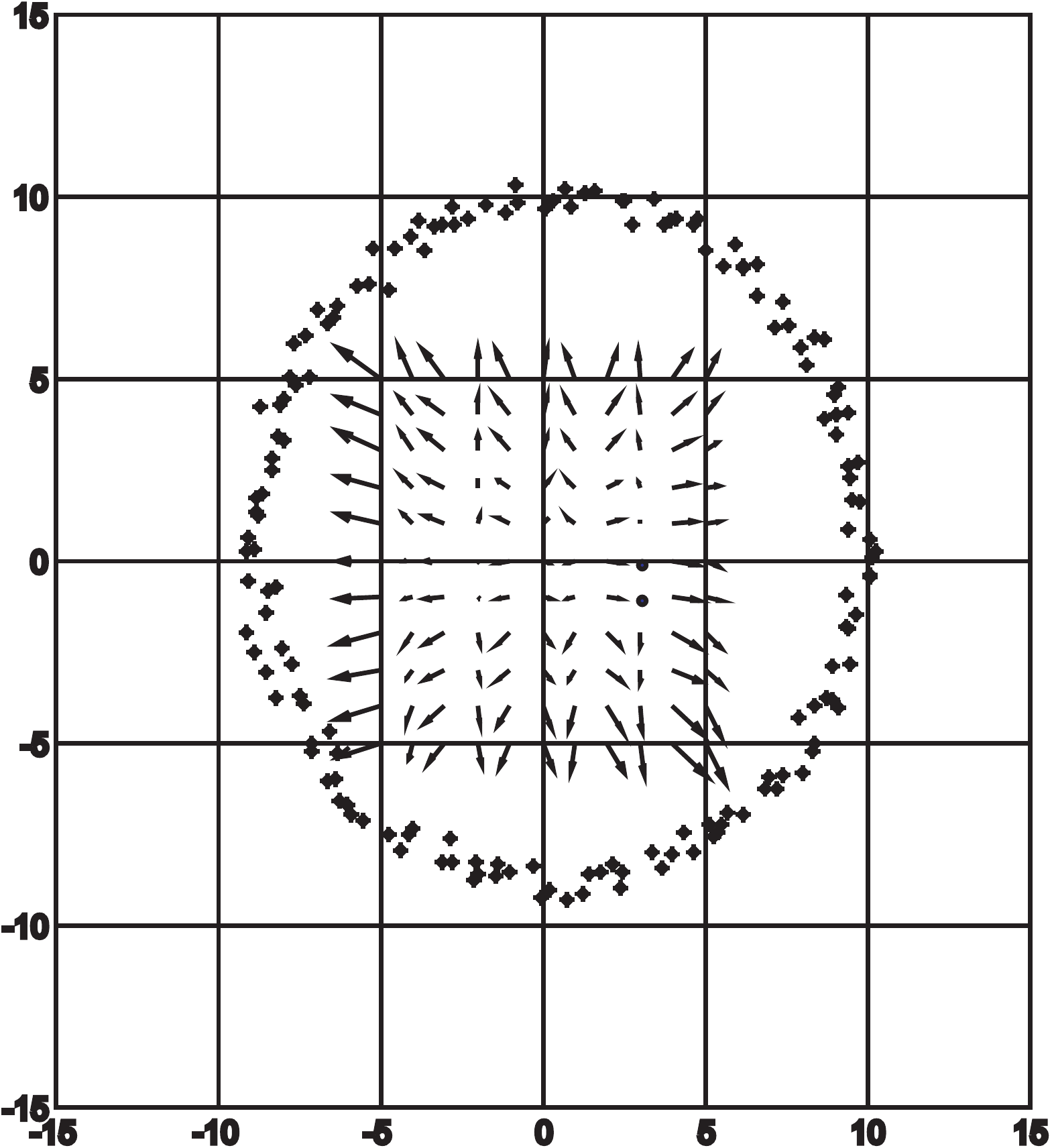} \caption{Dots depict the locations of the point
sources which are closer to the bounded domain and the arrows depict
the recovered vector field inside the bounded domain
}\label{fig:ManySourcesDistributionCloseToDomain}
\end{figure}
\newpage
\begin{figure}[!htb] \centering
\includegraphics[width=0.5\textwidth]{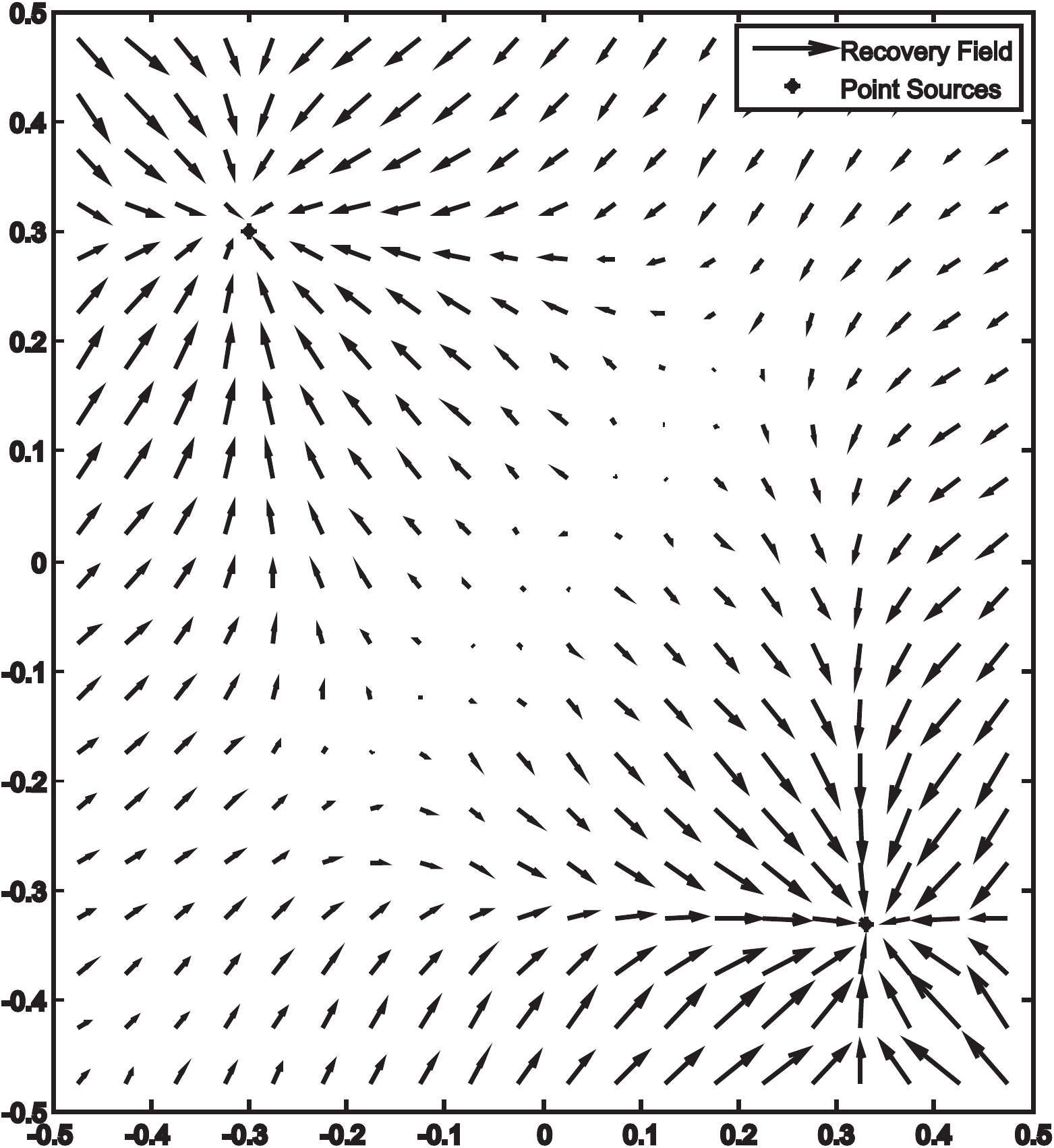} \caption{Recovered field when two point sources are inside the bounded domain}
\label{fig:SourcesInsideDomain}
\end{figure}
\item
 In the case where the sources are localized inside the bounded domain, the
 relative error of the magnitude of the field is
 too high while the phase error is low and the vector arrows point to the sources' positions.
 The field close to sources is not smooth, i.e. the real field has both high and low spatial frequency terms and thus
 for poor spatial resolution of the bounded domain (fig. \ref{fig:SourcesInsideDomain}): only the low
 spatial frequency terms can be recovered. The same actually happens in the previous case
 (fig.\ref{fig:ManySourcesDistributionCloseToDomain})
 when the point sources are placed closer to the grid. In
 figure \ref{fig:SourcesInsideDomain} the relative error of the magnitude is
 high due to low resolution. However, the recovered field points at the source locations which indicates that the low spatial
 frequency terms give some extremely useful information about the direction of the field.
\end{itemize}

\paragraph{Condition number vs Resolution\\}
Finally, figure \ref{fig:ConditionNumberVsResolution} depicts the
grid resolution against the condition number of the transfer matrix
of the linear system. By construction the transfer matrix of the
linear system depends on the geometrical characteristics of the
region where we intend to estimate the field and the change of the
resolution. So, as our experiments are performed in a rectangular
region then for constant resolution all the examples share the same
transfer matrix and what it changes in our linear system basically
is the boundary measurements $\textbf{b}$. Therefore, the following
graph presents the relationship between the rectangular resolution
and the conditioning of the transfer matrix.
\begin{figure}[!htb] \centering
\includegraphics[width=0.6\textwidth]{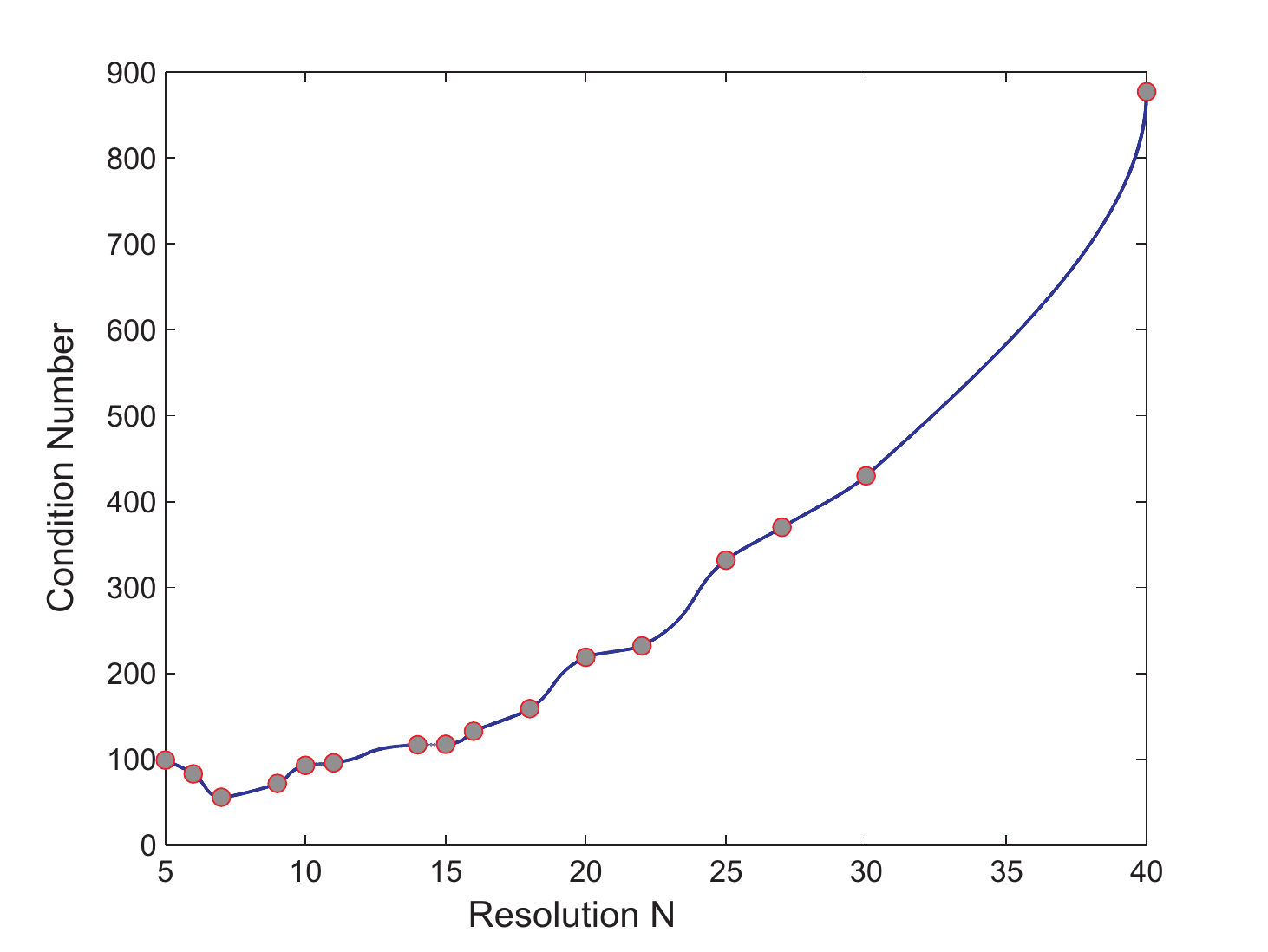}\caption{Condition number for
increasing values of grid resolution $N\times
N$.}\label{fig:ConditionNumberVsResolution}
\end{figure}
So, one more time we conclude that different discretization choices
of the grid impact the formulation of matrix $\textbf{A}$ and the
increase in spatial resolution actually increases the
ill-conditioning of this inverse problem.

\subsubsection{The effect of additive noise on the reconstruction}
In this paragraph we examine the effect of noise in $\textbf{b}$
measurements of the linear system
$\textbf{A}\textbf{x}_{LS}=\textbf{b}$ for the $2^{nd} example$ of
the previous subsection where the field was produced by $2$ point
sources with $Q=10^{-8}$ localized at $(-19,0)$ and $(19,19)$ far
from the bounded domain $\Omega:\{(x,y)\in[-2.5,2.5]^2\}$. According
to figure \ref{fig:ConditionNumberVsResolution} the lower acceptable
value of condition number $K(\textbf{A})$ is $72$ such as to balance
ill conditioning-resolution $N\times N=9\times 9$. So the majority
of the rest simulations with noisy measurements will be performed
for this resolution however, some further simulations with higher
condition number will be estimated in order to examine the stability
of the system in the presence of noise. Moreover, we employ the
relative magnitude of the input and output error to provide a
measure of stability factor $S$:
\begin{eqnarray*}
&e_{out}=\frac{\|\textbf{x}_{LS}-\textbf{x}_{exact}\|_2}{\|x_{exact}\|_2}\\
&e_{in}=\frac{\|\bar{\textbf{b}}-\textbf{b}\|_2}{\|\textbf{b}\|_2}\\
&S=\frac{e_{out}}{e_{in}}
\end{eqnarray*}

Noisy measurements are estimated according to
$\bar{\textbf{b}}=\textbf{b}+n\cdot randn()$ where $randn()$ is
Matlab function  which produce random numbers whose elements are
normally distributed with mean $0$, variance $\sigma^2=1$ and
$n\in[5 \cdot10^{-3},5\cdot 10^{-2}]$ In figure
\ref{fig:EinEoutExample2} for each $n$ value, we estimate the mean
value of $e_{out}$  using $100$ different noise vectors. Stability
factor was estimated between $8.5\sim9.5$ (we have to mention that
$S$ is meaningless for
$\bar{\textbf{b}\rightarrow\textbf{b}}$.)Particulary, x-axis depicts
the Signal to Noise ratios (SNR)of our inputs which are given by
$20log_{10}(1/mean(e_in)_n)$  while y-axis the mean values of
$mean(e_{out})_n$. Obviously for higher values of the SNR (low
additive noise) the output relative error is low.
\begin{figure}[!htb] \centering
\includegraphics[width=0.5\textwidth]{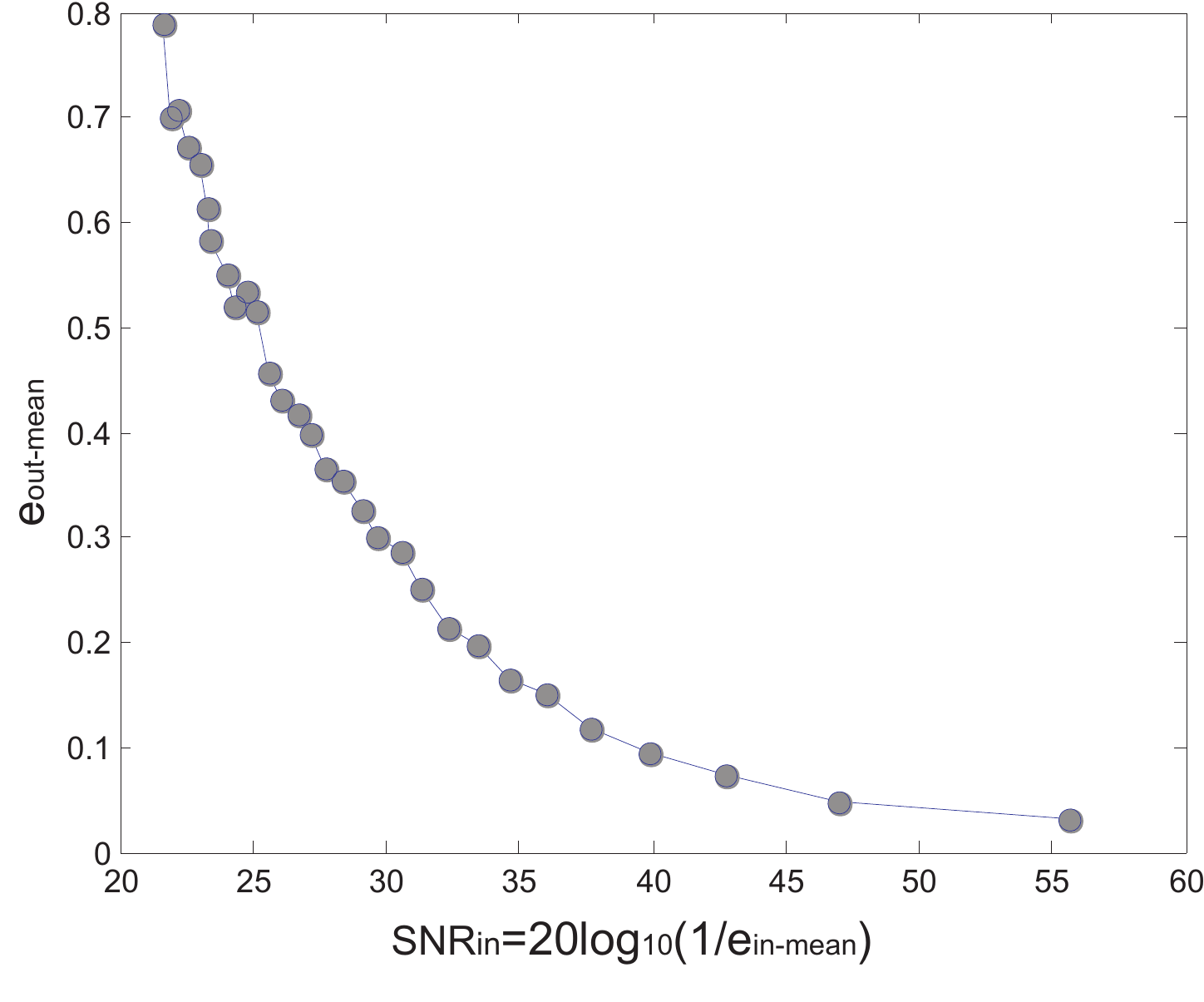} \caption{Mean values of the output error $e_{out}$ for
different values of the input SNR when the condition number of the
transfer matrix is low.}
 \label{fig:EinEoutExample2}
\end{figure}

For the case where the condition number is a bit higher i.e.
$\textbf{A}=132.6$ and $N\times N=16\times 16$ the curve between
$e_{out}$ and SNR is depicted in figure
\ref{figEinEoutExample2CondHigher}.
\begin{figure}[!htb] \centering
\includegraphics[width=0.5\textwidth]{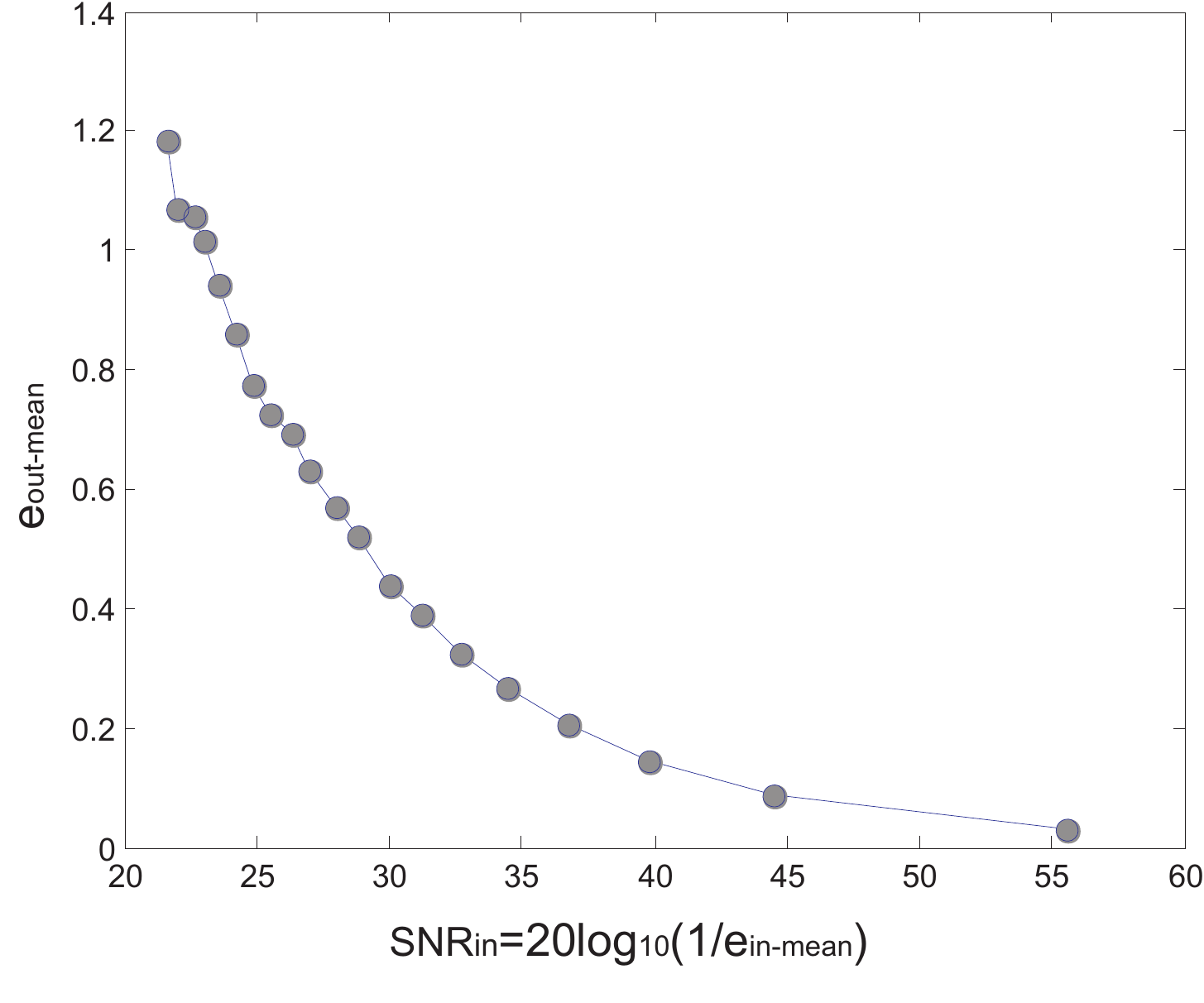} \caption{Mean values of the output error $e_{out}$ for
different values of the input SNR and $S$ factor range between
$14.1\sim18$.}
 \label{figEinEoutExample2CondHigher}
\end{figure}

Comparing figures \ref{fig:EinEoutExample2} and
\ref{figEinEoutExample2CondHigher} and stability factors $S$ we can
mention that the increase in the ill conditioning of the
problems(even a small increase) affects the accuracy of the output
solution in the presence of noise. Obviously when the level of noise
is high and we need a better accuracy(grid refinement) then extra
regularization such as Tikhonov should be considered and possible
pre-processing of the measurements to reduce noise level.

\subsection{Discussion}
Different discretization choices of the grid will impact the
formulation of matrix $\textbf{A}$. The increase in discretization
resolution on the bounded domain $\Omega$ actually increases the
ill-conditioned nature of the inverse problem. The ill conditioning
of the linear system results in the instability of the solution of
the system especially in the presence of noise. In the current
simulation, there was no additional noise. However, the majority of
real problems suffer from additive or multiplicative noise and a
future examination of the effect of noise to the inverse problem
will be performed.

In order to alleviate the ill conditioning of the linear system, one
may perform a coarse discretization of the bounded domain. This
leads to a ``rough'' approximation of the vector field (see fig.
\ref{fig:ManySourcesDistributionCloseToDomain}). For a non-smooth
field (fig.\ref{fig:SourcesInsideDomain}) the approximation error is
considerable. Particularly, let us consider vector field
$\textbf{E}(x,y)$ in bounded domain $\Omega$ as the sum of spatial
frequency terms according to Fourier series expansion
\[\textbf{E}(x,y)=\sum_{\textbf{k}}\textbf{e}_ke^{i\textbf{k}\langle x,y\rangle}+\bar{\textbf{e}}_ke^{-i\textbf{k}\langle x,y\rangle}\]
The spatial frequency $\textbf{k}$ is directly related to the
spatial resolution of the domain and basically the resolution
determines the spatial frequency band limit of the vector field to
be recovered. If the resolution of the domain is high then more
spatial frequency terms can be recovered. However, the
``uncontrolled'' increase in resolution worsens the problem
conditioning. So, the mathematical and experimental choice of the
optimal discretization of the bounded domain, such as to minimize
the approximation error while maximizing the stability of the linear
system needs to be assessed.

\section{Practical Aspects Of The Method}
A major objective of this section is to introduce the inverse
bioelectric problem as one of the most prominent biomedical
potential applications of the proposed vector field method.
Moreover, a comparison of the line integral method with the current
partial differential formulation will be presented.
\subsection{Inverse Bioelectric Field Problem}
The bioelectric field \cite{MalmivuoBiolelctroBook1995} is the
manifestation of the current densities inside the human body as a
result of the conversion of the energy from chemical to electric
form (excitation) in the living nerves, muscles cells and tissues in
general.

Bioelectric inverse techniques intend to estimate the source
distributions inside specific parts of the body, e.g brain and
heart, employing the EEG or ECG recordings (passive methods). The
ordinary mathematical modeling of this problem is based on the
solution of a boundary value problem of an elliptic partial
differential equation (PDE). As we will see later, this problem is
ill-posed as it does not satisfy some of the three Hadamard's
criteria: existence, uniqueness of the solution and continuous
dependence of the solution upon the data.
\subsection{PDE Methods}
The bioelectric field problem can be formulated in terms of either
the Poisson or the Laplace equation depending on the physical
properties of the problem \cite{BioengineeringHandbook}.

In particular, for the problem formulation, physical characteristics
of the electrical sources of the human body have to be defined
\cite{MalmivuoBiolelctroBook1995}. The primary source $\rho(x,y)$ of
electric activity is produced as a result of the transformation of
the energy inside the cells from chemical to electric, which
consequently induces an electric current of the form $\sigma
\textbf{E}$, where $\sigma$ is the bulk conductivity of the volume.
In general, the electric density is time-varying, however, the
passive way of the data acquisition (not imposed external potentials
like in electrical impedance tomography) using EEG or ECG to record
bioelectric source behavior in low frequencies (frequencies below
several kHz) \cite{BFP_johnson1993},\cite{Geselowitz} enables the
quasi-static treatment of the problem. The static consideration of
the field in a negligible short time implies that the capacitance
component of the tissues ($j\omega\textbf{E}$) can be assumed
negligible.

So, the total current density is given by
$\textbf{J}=\rho(x,y,z)+\sigma\textbf{E}$ and due to the
quasi-static condition it obeys $\nabla \cdot\textbf{J}=0$. A last
important point is that electromagnetic wave effects are also
neglected \cite{Geselowitz} and therefore the electric field is
given by $\textbf{E}=-\nabla\Phi$.

\subsubsection{Poisson Equation Formulation}
The commonly used inverse bioelectric methods try to calculate the
internal current sources $\textbf{J}$ given a subset of
electrostatic potentials measured on the scalp or other part of the
body, the geometry and electrical conductivity properties within
this human part.

The physical and mathematical model of the inverse bioelectric
problem is derived from $\nabla \cdot\textbf{J}=0$ and the scalar
potential representation of the field $\textbf{E}=-\nabla{\Phi}$ in
a bounded domain $\Omega$. Hence, the inverse problem is based on
the solution of a Poisson-like equation:
\begin{equation}\label{eq:PoissonPDE}\nabla\sigma\nabla\Phi=-\rho
\,\,\,\,\mbox{in}\,\,\Omega\end{equation} with the Cauchy boundary
conditions:
\[\Phi=\Phi_0\,\,\,\mbox{on}\,\,\Sigma\subseteqq\Gamma_T \,\,\,\,\,\,\,\mbox{and}\,\,\,\,\,\,\,\sigma\nabla\Phi\hat{\textbf{n}}=0 \,\,\,\mbox{on}\,\,\Gamma_T=\partial\Omega\,\]
where $\Phi$ is the electrostatic potential, $\sigma$ is the
conductivity tensor, $\rho$ are the current sources per unit volume,
and $\Gamma_T$ and $\Omega$ represent the surface and the volume of
the body part (e.g. head), respectively.

The Dirichlet condition
($\Phi=\Phi_0\,\,\,\mbox{on}\,\,\Sigma\subseteqq\Gamma_T$) is a
mathematical abstraction of potential measurements on $\Sigma$,
which are in reality obtained from only a finite number of
electrodes, and  the Neumann condition
($\sigma\nabla\Phi\hat{\textbf{n}}=0
\,\,\,\mbox{on}\,\,\Gamma_T=\partial\Omega$) is describing that no
current flows out of the body.

For the solution of the problem, the modeling of the sources is
extremely essential and the difficulties of the design of an
accurate model impose significant limitations to the current
solution techniques.

In particular, usually the sources are mathematically assumed to be
dipoles with unknown magnitude, position, and orientation
\cite{BFP_johnson1993},\cite{BioengineeringHandbook}. The simplest
problem uses the assumption that $\rho = Q(m, P, O)$ is a single
dipole, characterized by  three parameters (six degrees of freedom),
corresponding to magnitude $m$, position $P (x, y, z)$, and
orientation $O$. These parameters are adjusted such that the
resulting electrostatic potential best matches the given measured
data (trial and error method)\cite{BFP_johnson1993}. More general
models with several dipoles and consequently increased number of
unknown parameters have been designed. Unfortunately, more general
cases where there are no previous assumptions about the sources have
no unique solution. Solutions can only be found with suitable
regularization or a restriction of the solution space by assuming
$\rho$ of a special form.

Generally, the inverse bioelectric field problem
(\ref{eq:PoissonPDE}) is ill posed as for an arbitrary source
distribution, it does not have a unique solution and the solution
does not depend continuously on the data i.e. small errors in
measurements may cause large errors in the solution.

More specifically, according to Helmholtz's theorem earlier, and
Thevenin's (Norton's) theorem later
\cite{Johnson03},\cite{MalmivuoBiolelctroBook1995}, it is always
possible to replace a combination of sources and associated circuity
with a single equivalent source and a series of impedances. Thus,
circuits with different structure (impedances' sequence and source
location) can give the same equivalent circuit (Thevenin circuit).
Similarly, in the electric field which can be assumed as an
equivalent representation of a circuit, a different source
distribution can give the same equivalent source-generator
(Superposition Principle) which gives rise to the observed weighted
boundary potentials. In other words, the same observed potential
measurements could be produced from different source distributions
and thus the problem has not unique solution. So, for the estimation
of the sources the common method \cite{MalmivuoBiolelctroBook1995}
is to breaks up the solution domain into a finite number of
subdomains. In each subdomain, a simplified model of the actual
bioelectric source (such as a dipole) is assumed. The problem then
is to find the magnitude  and direction of each of the simplified
sources in the subdomains.

In addition, theoretically the elliptic equations boundary value
problem of Poisson PDE with Cauchy boundary conditions leads to an
over-specified or too ``sensitive'' boundary value problem
\cite{MethodsForPhysics} resulting large errors in the solution for
small changes in boundary conditions. So, regularization techniques
should be performed for the stability of the solution \cite{Wang}.

\subsubsection{Laplace Equation Formulation}
If the inverse problem is formulated such as there are no sources on
the bounded domain, then we obtain the homogeneous counterpart of
the Poisson equation. These are the cases where there is interest on
the estimation of the potential distribution on a surface (e.g.
cortex or surface of the heart) and not in the whole volume from
boundary measurements (like scalp (EEG) or torso (ECG) recordings)
and hence the sources are out of the region of interest.

So, the Cauchy boundary value problem is formed by Laplacian
equation $\nabla \sigma\nabla\Phi=0$. This is also an ill posed
problem in the sense that the solution does not depend continuously
on the data. Numerically, the solution of the Laplace's equation can
be formulated as $\textbf{A}\Phi=\Phi_{boundaries}$. As the
continuous problem is ill posed due to its physical nature
\cite{Wang},\cite{MethodsForPhysics}, matrix $\textbf{A}$ will be
ill conditioned.

There are different regularization methods and numerical techniques
for the approximate solution of this system
\cite{Johnson97},\cite{Wang}. In the current section, we will not
give further details about these techniques. However, a brief
comparison between Laplace's equation and the line integrals methods
for the estimation of a vector field will be presented.

\subsection{Vector Field Method}
In the previous description, the inverse bioelectric theory intends
to estimate the positions and strengths of the sources within a
volume (e.g brain) that could have given rise to the observed
potential recordings under the assumptions that the field is
quasi-static and irrotational. Obviously, the physical principles of
the bioelectric field are fitted well with the preconditions and
assumptions of the proposed vector field tomographic method. Thus,
the proposed method can be applied to recover a bioelectric field in
discrete points using the boundary information without needing any
prior information about the sources' formulation while the boundary
conditions are automatically satisfied as they are directly
incorporated in the line integral formulation.

\paragraph{\emph{Vector Field Recovery with Laplace and Vector Field Method}\\}
We present the recovery of vector field in a bounded domain
$[-5.5,5.5]^2$ with cell size $P\times P=1\times1$, grid $N\times
N=11\times11$ and sampling step $\Delta r=0.0001P$, solving two
systems of linear equations where one was derived from the line
integral method as it was described previously and another from the
numerical solution of the Laplace equation
\[\nabla^2\Phi=\frac{\partial^2\phi}{\partial x^2}+\frac{\partial^2\phi}{\partial y^2}=0\]
based on the finite difference approximation
\begin{equation*}\label{SecondOrderXX}\frac{\partial^2 \phi}{\partial
x^2}=\frac{\phi_{i+1,j}-2\phi_{i,j}+\phi_{i-1,j}}{h^2} \,\,\,
\frac{\partial^2 \phi}{\partial
y^2}=\frac{\phi_{i,j+1}-2\phi_{i,j}+\phi_{i,j-1}}{h^2}
\end{equation*}
\begin{figure}[!htb]
\centering
\includegraphics[width=0.8\textwidth]{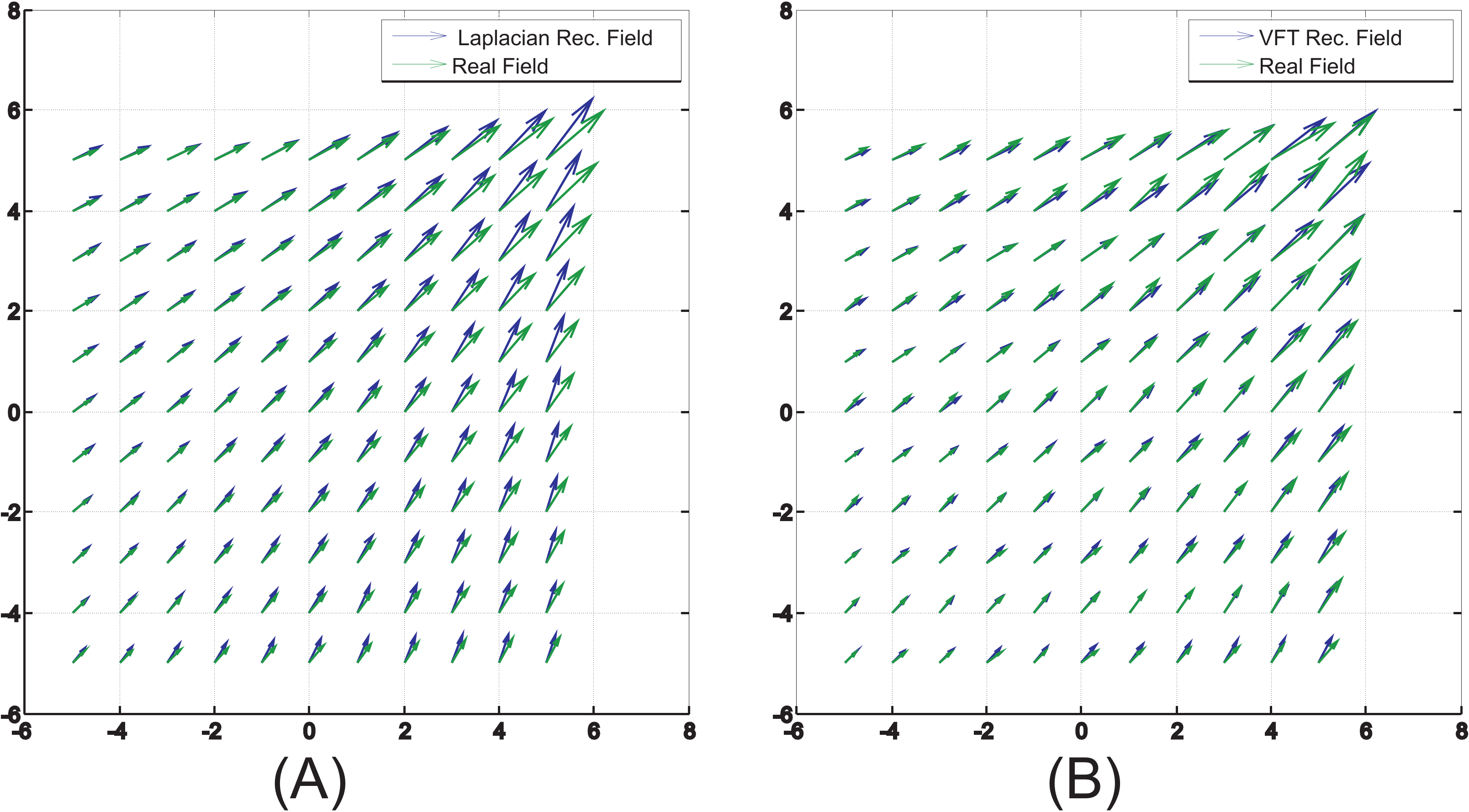} \caption{ The recovered fields applying (A) numerical solution of a Laplacian equation and (B)
vector field line integrals, produced by a point source $Q=10^{-8}$
localized in $(13,15)$.}\label{fig:VFTvsLaplace}
\end{figure}

For the recovery of the field in figure \ref{fig:VFTvsLaplace}A, the
rows of system $\textbf{A}\Phi=\Phi_{boundaries}$ were defined
according to
\[4\phi_{i,j}-\phi_{i-1,j}-\phi_{i+i,j}-\phi_{i,j-1}-\phi_{i,j+1}=0\]
with known boundary potential values
$\phi_{0,j},\phi_{i,0},\phi_{0,N}$ and $\phi_{N,j}$. After the
estimation of $\phi$ the vector field was calculated by
\[\textbf{E}=\left(
\frac{\phi_{i+1,j}-\phi_{i-1,j}}{h},\frac{\phi_{i,j+1}-\phi_{i,j-1}}{h}\right)\]
with $h=P$. The coarse recovery of an electrostatic field employing
the vector field method presented better and more accurate results
than the finite difference Laplacian solution. Here, we have to
consider that $\textbf{E}$ is estimated from the potential values
$\Phi$ and thus extra numerical errors were introduce to the
Laplacian solution.

%%%%%%%%%%%%%%%%%%%%%%%%%%%%%%%%%%%%%%%%%%%%%%%%%%%%%%%%%%%%
%%%%%%%%%%%%%%%%%%%%%%%%%%%%%%%%%%%%%%%%%%%%%%%%%%%%%%%%%%%
\chapter{Conclusions}
Over the past two decades there has been an increasing interest in
vector field tomographic methods and many mathematical approaches
based on ray and Radon Transform theory have been reported.
Nowadays, there is a lack of potential applications, although all
the theoretical findings and the previous work may be extremely
essential for the evolution and the implementation of robust
techniques. Particulary, the theoretical and experimental study of
the proposed vector field method for real problems with simple
physical and geometric properties is the first step towards the
deeper understanding and the evolution of the method for more
complex real applications. Briefly, our future work will be focused
on three basic tasks:
\begin{itemize}
  \item robust theoretical formulation of the vector field method
  according to inverse problems theory;
  \item more sophisticated numerical approaches based on finite elements theory and further experiments and
  simulations;
  \item practical aspects of the method especially in the field of
  inverse bioelectric field problem with comparisons between the
  vector field method and the current PDE methods.
\end{itemize}

\section{Future Work}
\begin{itemize}
  \item\textbf{Theoretical Foundation}\\
  In subsection \ref{subsection:IllPosednessConsideration},
  basic physics principles are used to describe the ill conditioning
  of the vector field method and a more intuitive consideration of the
  ill posedness of the problem was given.
  Next step is the robust mathematical formulation of this inverse problem in Hadamard's sense, examining
  the uniqueness of the solution and the stability of the solution to the noise in the input
  data.

  In the mathematical theory of the inverse problems, an inverse problem is
  described by the equation \[K\textbf{x}=y\] where given $y$ and $K$ we have to solve
 for $\textbf{x}$.

  The vector field method is an inverse problem where operator $K$ is an integral operator.
  This integral operator is compact in many natural topologies and in general
  it is not one-to-one \cite{Kirsch}. Almost always compact operators $K$ are ill posed.
  However, this basic knowledge is not enough for the mathematical
  interpretation of the ill posedness of the vector field problem,
  so further theorems of the inverse problem theory,
  and assumptions should be considered.
  A robust and complete mathematical formulation of the problem is essential for
  a better understanding of the problem's nature and the potential numerical
  adaptations of this theoretical problem to real applications.
  \item \textbf{Numerical Implementation and Further Experiments}\\
  The numerical solution of the vector field recovery method is based on the
  discretization of the problem. Ill posedness of the continuous
  problem is expressed as ill conditioning of the system of linear
  equations in the discrete version. The conditioning of the problem
  depends on the resolution of the discrete domain and the applied interpolations schemes.
  A ``naive'' discretization may lead to disastrous results as
  for  very ``coarse'' discretization, the approximation errors increase
  while when an extreme grid refinement is applied, the conditioning of the
  numerical system deteriorates.

  In particular,
   the grid refinement can be used to increase the resolution in
   bounded domains, but must be used with care because refined meshes
   worsen the conditioning of the inverse problem and simultaneously
   increase the computational cost.
   So, there is an open
   question about the optimal discretization of the bounded domain,
   such as to minimize the approximation error while maximizing the
   stability of the linear system. For this purpose, discretization error bounds should be
   estimated mathematically and experimentally.
   Moreover, the discretization of the domain employing other
   elements,
   not cells or tiles as in the square domain, embedding more sophisticated techniques should be considered.

    Another option to overcome the ill conditioning of the problem is by increasing the
    observed data available on the boundary (increase boundary resolution).
    For the case of a simple $2D$ square domain,
    there is an initial approach for the definition of the optimum number and positions
    of the boundary measurements reported in \cite{SamplingBounds}. However, further investigation and
    generalization for arbitrary shapes of the bounded domain need to be
    considered. Moreover, it should be mentioned that the
    increase of boundary measurements is not a very efficient way to
    solve the problem as in practical applications (e.g. EEG) there are
    only a few measurements and from these few
    measurements we have to extract as much information as possible. Finally, the increased boundary resolution, not as a
    consequence of the increasing measured data but as a results of
    interpolation schemes \cite{Wang}, does not improve the
    problem conditioning in real problems.

    \begin{figure}[!htb] \centering
    \includegraphics[width=0.4\textwidth]{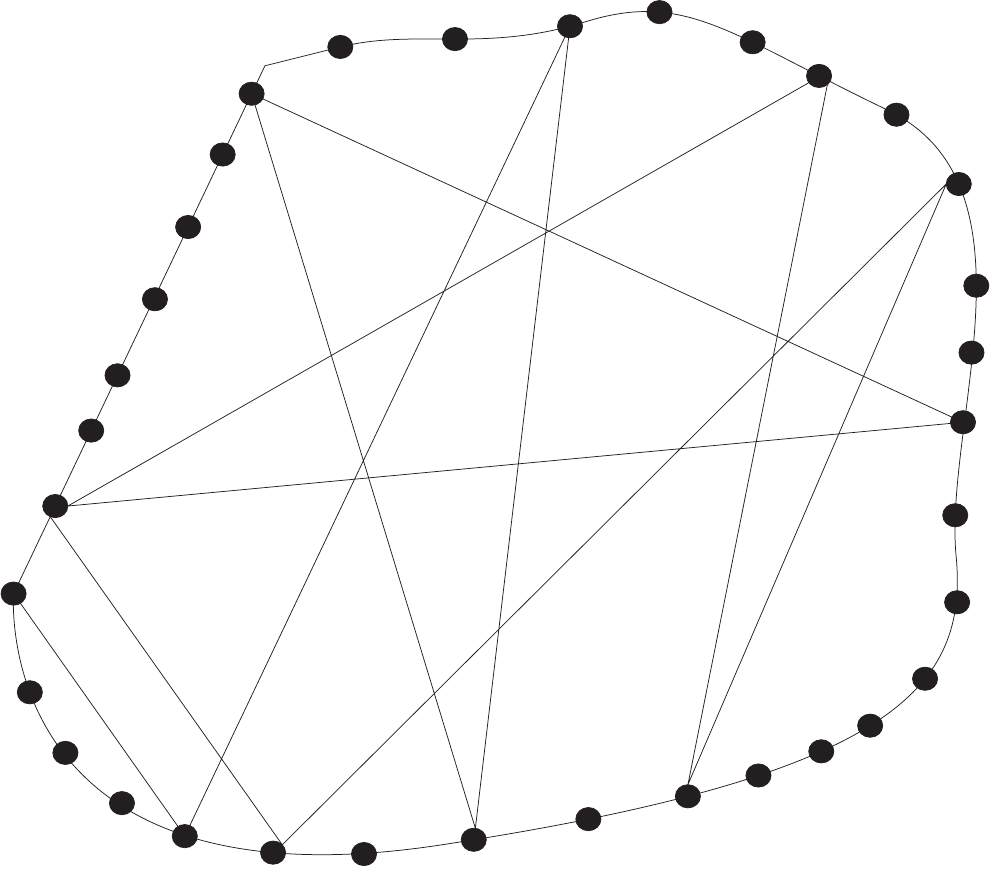} \caption{Recovery of the field
    using the vector field method for an arbitrary boundary shape.} \label{fig:SourcesInsideDomain}
    \end{figure}

\item
After the theoretical and the numerical schemes have been completed,
the
next steps are the \textbf{EEG application}.\\
In ordinary EEG source analysis, the inverse problem estimates the
positions and strengths of the sources within the brain that could
have given rise to the observed scalp potential recordings. In the
current project, the final stage is the implementation of a
different approach for the EEG analysis employing the proposed
method. Rather than performing strengths estimation or location of
the sources inside the brain, which are very complicated tasks, a
reconstruction of the corresponding static bioelectric field will be
performed based on the line integral measurements. This bioelectric
field can be treated as an ``effective'' equivalent state of the
brain. Thus for instance, health conditions or a specific pathology
(e.g. seizure disorder) may be recognized. This can be achieved by
designing a slice-by-slice reconstruction algorithm for vector field
recovery or the $3D$ extension  of the proposed theoretical model
where new numerical problems concerning the design of huge linear
systems (sparse matrices and different computational strategies)
arise.
\end{itemize}

%%%%%%%%%%%%%%%%%%%%%%%%%%%%%%%%%%%%%%%%%%%%%%%%%%%%%%%%%%%%%%%%%%%%%%%%%%

\appendix
\chapter{Rank of Perturbed Matrix $\bar{\textbf{A}}$}\label{Appen:RankOfPerturbedMatrix}
As it is described in \cite{MatrixAndLinearAlgebra} $p.216$, the
rank of a perturbed matrix $\bar{\textbf{A}}=\textbf{A}-\textbf{E}$
is equal to or greater than the $rank(\textbf{A})$ when the entries
of $\textbf{E}$ have small magnitude. According to the proof on
$p.217$ \cite{MatrixAndLinearAlgebra} matrix $\textbf{A}$ with
$Rank(\textbf{A})=p$ can be transformed in an equivalent matrix
(rank normal form) when elementary row and column operators are
applied i.e.
\[\textbf{A}\sim
\left(\begin{array}{cc} \textbf{I}_p & \textbf{0}\\
 \textbf{0}& \textbf{0}
 \end{array}\right)\]

 Equivalently for the $\textbf{A}-\textbf{E}$ where
 $\textbf{E}=\left(\begin{array}{cc}
 \textbf{E}_{11}^{p\times p}& \textbf{E}_{12}\\
\textbf{E}_{21} & \textbf{E}_{22}
 \end{array}\right)$ we obtain

 \[\textbf{A}-\textbf{E}\sim
\left(\begin{array}{cc} \textbf{I}_p-\textbf{E}_{11}^{p\times p} & \textbf{0}\\
 \textbf{0}& \textbf{S}
 \end{array}\right)\]

 Thus,
\begin{equation*}
\begin{split}
 Rank(\textbf{A}-\textbf{E})&=Rank(\textbf{I}_p-\textbf{E}_{11}^{p\times
 p})+Rank(\textbf{S})=\\
&=Rank(\textbf{A})+Rank(\textbf{S})\geqslant Rank(\textbf{A})
\end{split}
\end{equation*}

In practical problem, $\textbf{S}=\textbf{0}$ is very unlikely as
the perturbation (discretization or roundoff errors) is almost
``random''. Therefore
$Rank(\textbf{A}-\textbf{E})>Rank(\textbf{A})$.

\bibliographystyle{plain}
\bibliography{IC_transferPHD_Koulouri_2011}

\end{document}